\let\OLDthebibliography\thebibliography
\renewcommand\thebibliography[1]{
  \OLDthebibliography{#1}
  \setlength{\parskip}{0pt}
  \setlength{\itemsep}{0pt plus 0.3ex}
}
\DeclareMathOperator*{\argmin}{argmin}
\newenvironment{proof}{\noindent{\bf Proof.} }{\null\hfill$\Box$\par\medskip}
\newtheorem{theorem}{Theorem}[section]
\newtheorem{lemma}{Lemma}[section]
\newtheorem{corollary}{Corollary}[section]
\newtheorem{definition}{Definition}[section]
\newcounter{smallitemizec}
\date{}
\begin{document}
\title{Distributed Pattern Formation in a Ring\footnote{This research was supported by NSERC of Canada}}
\author[1]{Anne-Laure Ehresmann}
\author[2]{Manuel Lafond}
\author[1]{Lata Narayanan} 
\author[1]{Jaroslav Opatrny} 

\affil[2]{Department of Computer Science, University of Sherbrooke, Canada}
\affil[1]{Department of Computer Science and Software Engineering, Concordia University, Montreal, Canada}
\affil[ ]{\textit {\{alehresmann@gmail.com,Manuel.Lafond@USherbrooke.ca,
lata@encs.concordia.ca,opatrny@cs.concordia.ca\}}}
\maketitle

\begin{abstract}
Motivated by concerns about diversity in social networks, we consider the following  pattern formation problems in rings. Assume $n$ mobile agents are located at the nodes of an $n$-node ring network. Each agent is assigned a colour from the set $\{c_1, c_2, \ldots, c_q \}$. The ring is divided into $k$ contiguous {\em blocks} or neighbourhoods of length $p$. The agents are required to rearrange themselves in a distributed manner to satisfy given diversity requirements:  in each block $j$ and for each colour $c_i$, there must be exactly $n_i(j) >0$ agents of colour $c_i$ in block $j$. Agents are assumed to be able to see agents in adjacent blocks, and move to any position in adjacent blocks in one time step. 

When the number of colours $q=2$,  we give an algorithm that terminates in time $N_1/n^*_1 + k + 4$ where $N_1$ is the total number of agents of colour  $c_1$ and $n^*_1$ is the minimum number of agents of colour $c_1$ required in any block. When the diversity requirements are the same in every block, our algorithm requires $3k+4$ steps, and is asymptotically optimal.  Our algorithm generalizes for an arbitrary number of colours, and terminates in $O(nk)$ steps.  We also show how to extend it to achieve arbitrary specific final patterns, provided there is at least one agent of every colour in every pattern. 
\end{abstract}

\section{Introduction}

Recent research in sociology and network science indicates that diverse social connections have many benefits. For example,  de Leon {\em et al}~\cite{Leon2017} conclude that increasing diversity, and not just increasing size, of social networks may be essential for improving health and survival among the elderly. Eagle {\em et al} find that social network diversity is at the very least a strong structural signature for the economic development of communities \cite{Eagle1029}. Reagans and Zuckerman \cite{Reagans01} found that increased organizational tenure diversity in R\&D  teams
correlated positively with higher creativity and productivity. 

On the other hand, the pioneering work of Schelling \cite{schelling1969models,schelling1971dynamic} provided a model to describe how even small preferences for {\em locally} homogeneous neighbourhoods result in globally segregated cities. In his model, individuals of two colours are situated on a path, ring or mesh (i.e. a one- or two-dimensional grid) and have a threshold for the minimum number of neighbors of the same colour they require in their local neighborhood. If this threshold is not met, they move to a random new location. Schelling showed via simulations that this process inevitably led to a globally segregated pattern. This model has been studied extensively and the results confirmed repeatedly; see for example  \cite{benard2007wealth,benenson2009schelling,dall2008statistical,henry2011emergence,pancs2007schelling,young2001individual,zhang2004dynamic}.   Recent work in theoretical computer science \cite{brandt2012analysis,immorlica2017exponential} has shown that the expected size of these segregated communities can be exponential in the size of the local neighbourhoods in some social network graphs.   The Schelling model has also been studied from a games perspective, where neighborhood formation is determined by agents that can be selfish, strategic~\cite{chauhan2018schelling,elkind2019schelling} or form coalitions~\cite{bredereck2019hedonic}.  

In this paper, we study an algorithmic approach to {\em seeking} diversity. Consider a social network formed by a finite number of individuals, henceforth called {\em agents} that can be classified into a set of categories, called {\em colours}. The group collectively seeks diversity in {\em local}  neighborhoods. Is it possible to achieve a specified version of diversity? If so, how can the agents achieve this diversity?
These questions were first raised and studied in a recent paper \cite{latin2018}, in which the authors studied a model with red and blue agents specifying their local neighbourhood preferences in a ring network.  Centralized algorithms to satisfy the preferences of all agents, when possible, were given in \cite{latin2018}. 

In this paper, we study distributed algorithms for achieving diversity in local neighborhoods (blocks) in a  ring network.

\subsection{Model and problem definition}
We assume  we are given  a collection of $n$ \textit{agents} situated on the nodes of  an $n$-node ring network. 
We fix a  partition of the ring  into $k$  paths or {\em blocks} of length $p$, with $n=kp$. 
Each agent  has a colour drawn from the set $\{c_1, c_2, \ldots, c_q \}$. For ease of exposition, we initially focus our analysis on the case $q=2$, and call the colours $c_1$ and $c_2$ as {\em blue}  and {\em red} respectively.  Any specific clockwise 
ordering of these $n$ coloured agents around the ring  is called a 
\textit{configuration} or {\em $n$-configuration}. Starting from an initial configuration, we are interested in distributed algorithms for the agents to rearrange themselves into final configurations that meet certain given constraints, for example, alternating red and  blue agents, or each red agent has at least
one  blue agent as a neighbour. 

A configuration can be represented by a  string ${\cal C}$ of length $n$ drawn from the alphabet $\{c_1, c_2, \ldots, c_q \}$. Following standard terminology, for any string $u$, we denote by $|u|$ the length of $u$; by $u^i$ the string $u$ repeated $i$ times; by $uv$ the string $u$ concatenated with the string $v$. The number of occurrences of the colour $c_i$ in a string $u$ is denoted by $n_i(u)$.  A {\em pattern} $P$ is a string of length $p$ drawn from the alphabet $\{c_1, c_2, \ldots, c_q \}$ with $n_i(P) > 0$ for all $1 \leq i \leq q$. Corresponding to the partition of the ring into blocks mentioned above, a configuration can be seen as a concatenation of $k$ patterns $S_1S_2 \ldots S_k$, in which $S_i$ is the pattern of agents in the $i$-th block. 

We are interested  in distributed algorithms for the following problems:

\begin{enumerate}
\item[P1:] Given positive integers $n_i(j)$ for all $1 \leq i \leq q$ and $1 \leq j \leq k$ and a valid initial configuration ${\cal C}$, give a distributed algorithm that achieves a final configuration $S_1S_2\ldots S_k$ with
$n_i(S_j) = n_i(j)$ for all $1 \leq i \leq q$ and $1 \leq j \leq k$. 

\item[P2:] Given positive integers $n_i(j)$ for all $1 \leq i \leq q$ and $1 \leq j \leq k$ and a valid initial configuration ${\cal C}$, give a distributed algorithm that achieves a final configuration $S_1S_2\ldots S_k$ with
$n_i(S_j) \geq n_i(j)$ for all $1 \leq i \leq q$ and $1 \leq j \leq k$. 
\item[P3:]  Given a sequence of patterns $P_1, P_2, \ldots ,P_k$, and a valid initial configuration ${\cal C}$, give a distributed algorithm to achieve the final configuration 
$P_1P_2 \cdots P_k$.
\end{enumerate}
A given initial configuration ${\cal C}$ is {\em valid} for the three problems above (respectively) if and only if  for all $1 \leq i \leq q$, we have (P1) $n_i({\cal C}) = \sum_{j=1}^{k} n_i(j)$  
(P2) $n_i({\cal C}) \geq \sum_{j=1}^{k} n_i(j)$ and (P3) $n_i({\cal C}) = \sum_{j=1}^{k} n_i(P_j)$. Clearly the problem is only solvable for valid initial configurations. 
Figure~\ref{fig:example} shows an input configuration and an output configuration for problem P1 with the required number of agents of colours $c_1$ and $c_2$ being $2$ in all the blocks.

Notice that an algorithm for P1 can be used to solve P3 with a few modifications.
If specific patterns are required in blocks as in Problem P3, we can first apply
an algorithm for problem P1 
to obtain
the correct number of agents of different colours in every block. Once a block has the correct
number of agents of every colour, they can rearrange themselves in one additional step to form
the required pattern.

\begin{figure}[h]
\label{fig:example}
    \centering
\includegraphics[width=0.47\textwidth]{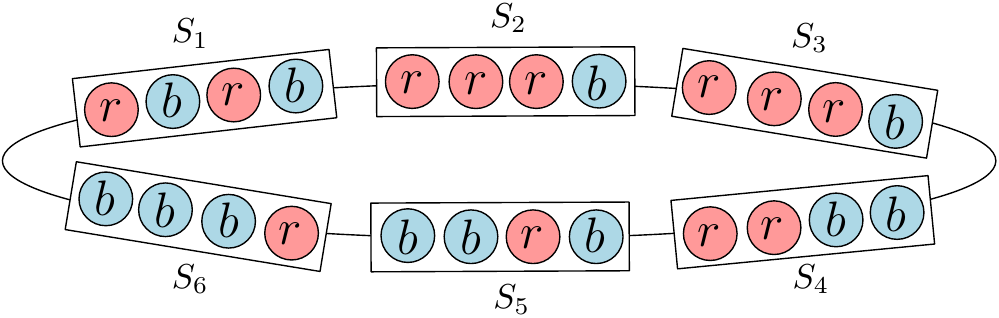}
\includegraphics[width=0.47\textwidth]{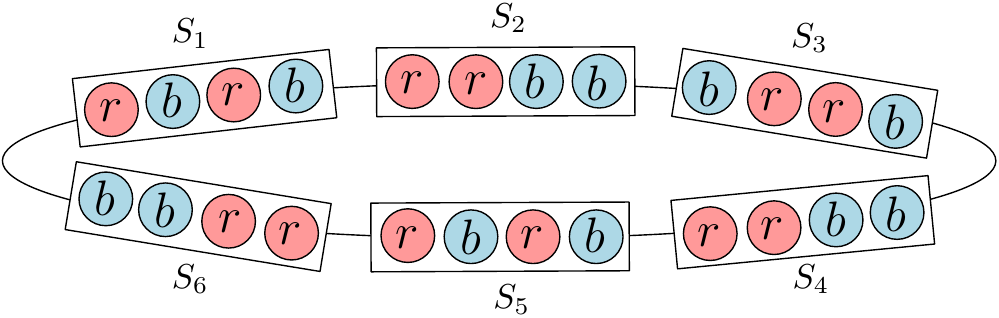}
\caption{An input configuration on the left, and a possible output on the right.}
\end{figure}

\noindent {\bf Agent model.} Our agent model is similar to the widely studied {\em autonomous mobile agent} model \cite{FPS2012}, where initially the agents are assumed to occupy arbitrary positions in the plane, and each agent repeatedly performs a {\em Look-Compute-Move} cycle. First an agent looks at the positions of the other agents. Then,
using the positions of other agents it found and its own, it computes its next position. Finally it moves to the newly computed position. The agents are generally assumed to have identical capabilities, they have no centralized coordination, and they do not communicate with each other. Indeed, their decisions are based only on their observations of their surroundings made during the look phase. Many different variations of the model have been studied, based on whether or not the agents are synchronized or not, how far they can see, whether or not they agree on a coordinate system, etc.

In this paper, as already stated, we assume that agents are initially placed at the nodes of a $n$-node ring network $G$. They belong to $q$ colour classes, but agents within a colour class are identical, and they do not have or use any id. Agents are aware of their position in the ring, and may use this position in their calculations; for example agents at odd positions in the ring may behave differently from agents at even positions.  They are completely synchronous, and  time proceeds in discrete time steps. We assume that agents have a common {\em limited visibility range}: each agent can see all agents in the ring that are within the same block or in adjacent blocks. They also have a limited movement range: in one step, an agent can move to any position in the same or adjacent blocks\footnote{We note that if the visibility and movement ranges are smaller, our algorithms still work, though they take more time. The details are lengthy and uninteresting, and therefore omitted from this paper.}. It is the responsibility of the algorithm designer to ensure that there are no collisions, {\em i.e.},  we must never have two agents in the same position in the ring. Agents are memoryless, and do not communicate; their decisions are made solely on what they see during their``look" phase. We say that a distributed algorithm for pattern formation by agents {\em terminates} in a time step, if agents have formed the required pattern, and no agent moves after this time step. 


\subsection{Our results}

We give solutions for problems P1, P3, and a restricted case of P2. We first consider in detail the case when the number of blocks $k$ is even, and the number of colours $q=2$, that is, all agents are blue or red.  We give an algorithm that terminates in time $N_b/n^*_b + k + 4$ where $N_b$ is the total number of  blue agents, and $n^*_b$ is the minimum number of blue agents  required in any block (Theorem~\ref{thm:time-bound}). Note that this time is upper bounded by $n/2$ in the worst case where $n$ is the total number of agents. When the requirements are homogeneous, that is, the same number of blue agents are required in every block, our algorithm terminates in $3k+4$ steps; we show that this is optimal (Corollary~\ref{cor:same}). 

We then give  algorithms for the cases when:
\begin{enumerate}
\item $k$ is odd.
\item the number of colours $q$ is more than 2. 
\item  a specific final configuration  $P_1 P_2\ldots P_k$ is desired.
\end{enumerate}
or any combination of the above (see Theorem~\ref{th:termination-q}). These algorithms terminate in $O(nk)$ time.

Finally, given $n_1(j)>0$ for all $1 \leq j \leq k$, we give an $O(nk)$ algorithm to achieve a final configuration in which the number of agents of colour $c_1$ is at least $n_1(j)$ in every block $S_j$ (see Theorem~\ref{th:termination-lower-bounded}).

\subsection{Related work}

There is a large literature on distributed computing by autonomous mobile agents, see the book \cite{FPS2012} that provides a very complete survey of results in this area. Much of this work is done for mobile agents located in the plane or a continuous region, and the main problems that have been
considered are gathering of robots \cite{peleg2005,FPSW05}, scattering and covering \cite{peleg2008}, and pattern formation \cite{SY99,cice16}
by autonomous mobile robots (see~\cite[section 1.3]{flocchini2019distributed}). The solvability, i.e., termination or convergence, of these problems depends on the type of synchronization, visibility, agreement on orientation \cite{FPS2012}. 
In~\cite{michail2016simple}, the authors introduce the problem of using a set of pairwise-communicating processes to constructing a given network.

In  discrete spaces, typically graphs like grids or rings, similar problems have been studied, see for example: gathering \cite{klasing2008,dangelo2012}; exploration  \cite{lamani2010}; pattern formation \cite{BAKA}; and deployment \cite{elor2011}.  Unlike in our paper, in all these studies, only a subset of the nodes of the graph is occupied by agents, and typically agents are identical, and not partitioned into colour classes.


The work closest to our work is \cite{latin2018}, which considers a similar setting of red and blue agents on a ring of $n$ nodes, and every node of a ring contains an agent. The agents are required to achieve final configurations on a ring, with specified diversity constraints similar to ours. However, unlike the present paper, in which the final configuration specifies the diversity constraints per block and agents are interchangeable,
in \cite{latin2018}, each agent has a preference for the colours of
 other agents  in its $w$-neighborhood. The authors consider there three ways for agents to specify these preferences: each agent can specify  (1) a preference list:  the {\em sequence} of colours of agents in the neighborhood, (2) a preference type: the {\em exact} 
number of neighbors of its own colour in its neighborhood,  or (3) a preference threshold: the {\em minimum }number of agents of its own colour in its neighborhood. The main result of \cite{latin2018} is that satisfying seating preferences is 
fixed-parameter tractable (FPT) with respect to parameter $w$ for preference types and thresholds, while it can be solved in $O(n)$ time for preference lists. They also show a linear time algorithm for the case when all agents have homogeneous preference types. However, in \cite{latin2018},  all algorithms are centralized unlike in the present paper, in which we are only interested in distributed algorithms.

\section{Distributed Algorithm for Two colours} 
\label{sec:dist-algo}

In this section  we consider problem P1 with $q=2$, that is, all agents are blue or red. We denote the number of blue and red agents in a pattern $u$ by $n_b(u)$ and $n_r(u)$ respectively. We first describe an algorithm for the first problem, {\em i.e.} when the exact number of blue and red agents in the  $i$-th block, denoted by $n_b(i)$ and $n_r(i)$ respectively,  is specified. 
For simplicity, we assume the number of blocks $k$ is even. We will show in Section~\ref{odd-case} that the algorithm also works when $k$ is odd. Let $N_b$ and $N_r$ be the total number of blue and red agents in the initial configuration. Furthermore, let 
$$n_b^* = \min_{1\leq j \leq k} n_b(j) \hspace*{0.3in}\mbox{    and    } \hspace*{0.3in}n_r^* = \min_{1\leq j \leq k} n_r(j)$$
We  assume without loss of generality that $N_b/n_b^* \leq N_r/n_r^*$. If not, we reverse the roles of the blue and red agents in the algorithm and subsequent analysis.

%
We use $S_i$ to denote both the $i$-th block and the pattern of agents  in it at a particular time step; the meaning used will be clear from context.  For every $i$, $1\leq i\leq k$ we denote by $y_i$ the difference between the number of 
blue agents in  $S_i$ and the desired number of blue agents:
$$y_i= n_b(S_i)-n_b(i)$$
We call $y_i$ a surplus  if positive, and a deficit if negative. 
  
 In each step, we partition the ring into \emph{windows} of length $2p$, each window consisting of 2 adjacent blocks. 
If blocks $S_i$ and $S_{i+1}$ are paired into a window in this partitioning, we denote the window by $[S_i | S_{i+1}]$.  Recall that in our agent model, agents can see and move to adjacent blocks, and know their positions in the ring. Therefore, agents in the same window can all see each other and can move to any position within the window (as $S_i$ has access to $S_{i+1}$ and vice-versa). Also, notice that this pairing is specific to our algorithm, and not a constraint of our model.  Agents in the window $[S_i | S_{i+1}]$ execute the following algorithm

%
%

\begin{algorithm}
\begin{algorithmic} 
\If{$y_i< 0$, i.e., $S_i$ has a deficit of blue agents and an excess of red agents}
	\State $t \gets \min (n_b^*, |y_i|, n_b(S_{i+1}))$
    \State Move blue agents before red agents in order-preserving manner in $S_i$ and $S_{i+1}$,
	\State  $t$  leftmost red agents of $S_i$  swap places with  $t$ leftmost blue  agents of $S_{i+1}$.
\EndIf
\end{algorithmic}
\caption{Algorithm for agents in the window $[S_i | S_{i+1}]$} 
\label{alg:window-algo}
\end{algorithm}

Note that neither moving blue agents before red agents, nor the specification that it is the leftmost agents that swap places, is necessary for the algorithm to terminate, nor do these affect the termination time; they only simplify the analysis. Furthermore, agents can compute the net effect of the two moves mentioned above, and perform a single movement in a single  
Look-Compute-Move cycle; thus
Algorithm~\ref{alg:window-algo} comprises a single time step.

  
Clearly agents in different windows can perform the above algorithm in parallel in the same time step without collisions. We give our pattern formation algorithm for agents of two colours below:  

\begin{algorithm}
\begin{algorithmic}
\State $i \gets 1$ 
\Loop{}

        \State Apply in parallel Algorithm  \ref{alg:window-algo} to windows
              $[S_i|S_{i+1}],  [S_{i+2}|S_{i+3}],  \ldots, [S_{i-2},S_{i-1}]$ 
\State $i \gets 1 +i \mod n$
\EndLoop
\end{algorithmic}
\caption{Pattern Formation Algorithm  }
\label{alg:main-algo-s}
\end{algorithm}

We call one iteration of the loop in Algorithm~\ref{alg:main-algo-s} a {\em round}. It is straightforward to see that each round takes constant time and can be performed without collisions. 
Notice that if a block is paired with the block on its right in a round, then it will be paired with the block on its left in the next round. 
In one round, surplus red agents (if any)  in the blocks  $S_i,S_{i+2}, S_{i+4}, \ldots, S_{i-2}$ move to  $S_{i+1},S_{i+3}, S_{i+5},$ $\ldots,S_{i-1}$  respectively, 
and in the next round surplus red agents in the blocks $S_{i+1},S_{i+3}, S_{i+5}, \ldots, S_{i-1}$ move to $S_{i+2},S_{i+4}, S_{i+6}, \ldots, S_i$,
respectively, and this is repeated.  
In other words, there is a flow of red agents to the right and an equivalent flow of blue agents to the left.  Since blue agents moving left from the block $S_{i+1}$ to $S_i$ are always the leftmost blue agents in $S_{i+1}$, and move to positions after the blue agents already present in $S_i$, and the blue agents move up to the beginning of each block before the next round,  each round preserves the clockwise order of the blue agents 
 as stated in the following lemma:

\begin{lemma} 
[Order-preserving property of blue agents] \label{lem:order} Let ${\cal C}$ and ${\cal C'}$ be the configurations before and after a round in Algorithm~\ref{alg:main-algo-s}. Then the clockwise ordering of blue 
agents in ${\cal C}$ and ${\cal C'}$ is the same. 
\end{lemma}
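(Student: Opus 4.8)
The plan is to follow the blue agents individually through one round (agents of one colour are indistinguishable, but the algorithm is deterministic so their trajectories are well defined) and to show that the cyclic clockwise order of the set of blue agents is unchanged. A round of Algorithm~\ref{alg:main-algo-s} applies Algorithm~\ref{alg:window-algo} in parallel to a family of windows $[S_i|S_{i+1}], [S_{i+2}|S_{i+3}], \ldots, [S_{i-2}|S_{i-1}]$ which, since $k$ is even, partition the ring into consecutive pairs of blocks, and each execution of Algorithm~\ref{alg:window-algo} moves agents only within the window it is applied to. So the first step I would carry out is to observe that no blue agent crosses a window boundary during a round: inside a window $[S_i|S_{i+1}]$ the only agents leaving $S_i$ for $S_{i+1}$ are the $t$ swapped red agents moving right and the $t$ swapped blue agents moving left, and all of them stay inside $[S_i|S_{i+1}]$; the ``move blue agents before red agents'' step is internal to a single block. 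Consequently each window retains exactly its own blue agents, and since the windows occur in cyclic order around the ring, it suffices to show that inside each window the clockwise order of blue agents is preserved.

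Next I would fix one window $[S_i|S_{i+1}]$. If $y_i\ge 0$ or $t=0$ the window does nothing and there is nothing to prove, so assume $y_i<0$ and $t=\min(n_b^*,|y_i|,n_b(S_{i+1}))>0$. Let $B_i$ (resp.\ $B_{i+1}$) be the sequence of blue agents of $S_i$ (resp.\ $S_{i+1}$) listed in clockwise order, and $R_i$ the red agents of $S_i$ in clockwise order; these sequences are unchanged by the order-preserving ``move blue before red'' step, after which, reading clockwise, the window is $B_i R_i \,|\, B_{i+1} R_{i+1}$. Write $B_{i+1}=B_{i+1}'B_{i+1}''$ with $|B_{i+1}'|=t$, and let $R_i'$ be $R_i$ with its first $t$ entries deleted. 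The swap exchanges the first $t$ entries of $R_i$ (which, after rearrangement, are exactly the $t$ leftmost red agents of $S_i$) with $B_{i+1}'$ (the $t$ leftmost blue agents of $S_{i+1}$), so afterwards $S_i = B_i B_{i+1}' R_i'$ while $S_{i+1}$ begins with $t$ red agents followed by $B_{i+1}''$ and then $R_{i+1}$. Reading the blue agents clockwise across the window therefore gives $B_i\, B_{i+1}'\, B_{i+1}'' = B_i\, B_{i+1}$, which is precisely the clockwise blue order the window had before the round.

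Assembling the two steps: each window keeps its own blue agents and preserves their internal clockwise order, and concatenating the windows clockwise reconstitutes the entire ring, so the cyclic clockwise order of all blue agents in $\mathcal{C}'$ coincides with that in $\mathcal{C}$. I do not expect a genuine obstacle here; the only care needed is the bookkeeping identifying the ``leftmost'' red agents of $S_i$ and the ``leftmost'' blue agents of $S_{i+1}$ (after the rearrangement step) with the prefixes $R_i[1\mathinner{\ldotp\ldotp}t]$ and $B_{i+1}[1\mathinner{\ldotp\ldotp}t]$ used above, together with the observation that the windows of a round genuinely partition the ring because $k$ is even. The lemma is really just the formal record of the fact noted just before its statement: blue agents move left only into the slots immediately following the blue agents already present in the destination block.
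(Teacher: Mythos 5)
Your proof is correct and follows essentially the same route as the paper, which justifies the lemma only by the informal remark preceding it (the swapped blue agents are the leftmost ones of $S_{i+1}$ and land immediately after the blue agents already in $S_i$, and windows are disjoint since $k$ is even). Your window-by-window bookkeeping is simply a careful formalization of that same observation, so there is nothing to flag.
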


\subsection{Proof of correctness and termination}

We now show that this flow of agents eventually stops, and the algorithm terminates with the desired number of blue and red agents in every block. We first introduce some definitions and observations about the properties of a configuration. 

\begin{figure}[h]
\label{fig:lemma-excess}
    \centering
    \includegraphics[width=0.485\textwidth]{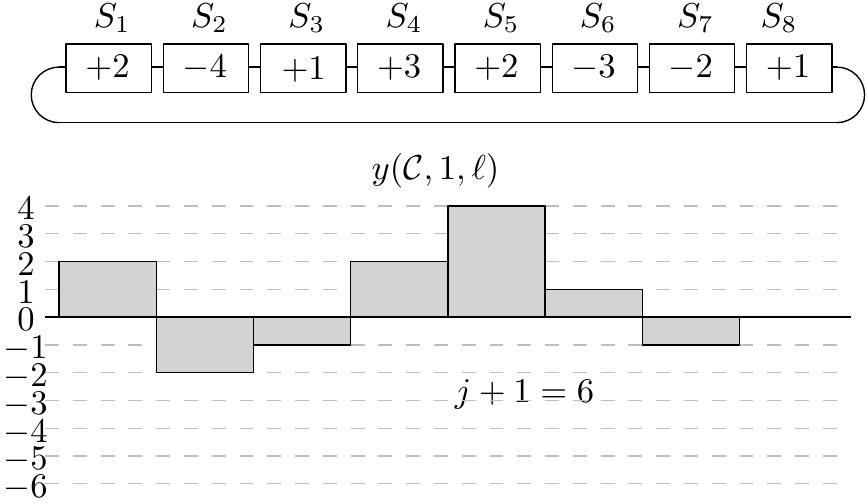}
    \includegraphics[width=0.485\textwidth]{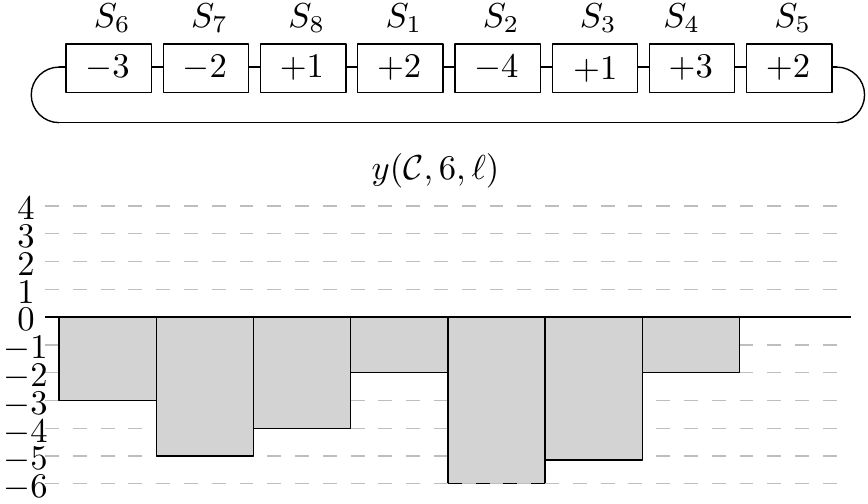}
    \caption{The surplus of blue agents $y_i$ is shown inside the box for slice $S_i$. The plot below shows the cumulative surplus $y({\cal C}, 1, \ell)$ on the left and $y({\cal C}, 6, \ell)$ on the right}
\end{figure}

\begin{definition} 
For any given valid configuration ${\cal C}=S_1S_2\ldots S_k$, define $y({\cal C},i,\ell)$ $=\sum_{j=0}^{\ell-1}{y_{i+j}}$, i.e.,  $y_i$ is the cumulative surplus of blue agents in  $\ell$ consecutive blocks starting from $S_i$.
\end{definition}

\begin{definition}

For any given valid configuration ${\cal C}=S_1S_2 \ldots S_k$, define $y({\cal C},i)$ to be the maximum in the set $\{y({\cal C},i,1), y({\cal C},i,2), \ldots,y({\cal C},i,k)\}$. 
\end{definition}

\noindent Figure~\ref{fig:lemma-excess} shows $y({\cal C}, 1, \ell)$ for a sample configuration. Notice that $y({\cal C}, 1) = y({\cal C}, 1, 5) = 4$.

\begin{lemma} 
\label{lemma:excess} In any given valid configuration ${\cal C}=S_1 S_2 \ldots S_k$ there exists an integer $i$ such that $y({\cal C},i)=0$.
\end{lemma}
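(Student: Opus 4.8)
The plan is to reduce the statement to the familiar fact that a prefix-sum sequence of a cyclic list with total sum zero attains a maximum, and that starting to accumulate immediately after such a maximum never produces a positive partial sum. First I would record the one structural consequence of validity that is needed: since ${\cal C}=S_1S_2\ldots S_k$ is valid for P1, $n_b({\cal C})=\sum_{j=1}^k n_b(j)$, and on the other hand $n_b({\cal C})=\sum_{j=1}^k n_b(S_j)$ simply counts all blue agents; subtracting gives $\sum_{j=1}^k y_j = \sum_{j=1}^k \bigl(n_b(S_j)-n_b(j)\bigr)=0$.

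Next I would introduce prefix sums $Y_0=0$ and $Y_m=\sum_{j=1}^{m} y_j$ for $1\le m\le k$, so that $Y_k=0$ by the previous paragraph, and extend the sequence cyclically by $Y_{m+k}=Y_m$. With this notation $y({\cal C},i,\ell)=Y_{i-1+\ell}-Y_{i-1}$ for every $1\le \ell\le k$ (interpreting the index $i-1+\ell$ modulo $k$ and using $Y_k=Y_0=0$ to stitch across the wrap-around). Now let $m^\star\in\{0,1,\ldots,k-1\}$ be an index maximizing $Y_{m^\star}$; since $Y_k=0=Y_0\le Y_{m^\star}$, the value $Y_{m^\star}$ is in fact the maximum of $Y$ over all of $\{0,1,\ldots,k\}$. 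Set $i=m^\star+1\in\{1,\ldots,k\}$.

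It then remains to check that $y({\cal C},i,\ell)\le 0$ for every $1\le\ell\le k$, and that equality holds for $\ell=k$. If $m^\star+\ell\le k$ this is immediate: $y({\cal C},i,\ell)=Y_{m^\star+\ell}-Y_{m^\star}\le 0$ by maximality of $Y_{m^\star}$. If $m^\star+\ell> k$, I would split the sum at the wrap-around, writing $y({\cal C},i,\ell)=(Y_k-Y_{m^\star})+(Y_{m^\star+\ell-k}-Y_0)=Y_{m^\star+\ell-k}-Y_{m^\star}\le 0$, again by maximality, since $m^\star+\ell-k$ lies in $\{0,\ldots,k-1\}$ whenever $1\le\ell\le k$. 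Finally, taking $\ell=k$ gives $y({\cal C},i,k)=Y_{m^\star}-Y_{m^\star}=0$, so the maximum defining $y({\cal C},i)$ equals $0$. This establishes the lemma with the chosen $i$.

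I do not expect any serious obstacle here; the only point demanding a little care is the bookkeeping across the cyclic boundary (handling the two cases $m^\star+\ell\le k$ and $m^\star+\ell> k$ and verifying the index $m^\star+\ell-k$ stays in range), and the minor but essential observation that validity forces $\sum_j y_j=0$, which is exactly what lets the length-$k$ cumulative surplus vanish and makes a global maximum of the prefix sums the right place to start.
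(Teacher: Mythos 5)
Your proposal is correct and follows essentially the same route as the paper: pick the index where the cumulative surplus (prefix sum of the $y_j$) from $S_1$ is maximized, start just after it, and use the fact that validity forces the total surplus over the whole ring to be zero so that all partial sums from that starting point are nonpositive and the full-cycle sum is exactly zero. Your write-up is simply a more explicit version of the paper's argument, spelling out the wrap-around bookkeeping that the paper leaves implicit.
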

\begin{proof}
Let $j$ be the integer such that 
$y({\cal C},1,j)=y({\cal C},1)$, i.e., the index of the block where 
the sum of differences of number of blue agents between the blocks and patterns, when counted from $S_1$, is maximized. See Figure~\ref{fig:lemma-excess}. Consider 
the value $y({\cal C},j+1,\ell)$ for some integer $\ell$. Clearly,   $y({\cal C},j+1,\ell) =  y({\cal C},1,\ell)-y({\cal C},1,j)\leq 0$ by our choice of $j$. On the other hand in any valid 
configuration the total excess of red agents in the configuration is $0$. 
Thus, $y({\cal C},j+1)=0$. See Figure~\ref{fig:lemma-excess}; in this example, \ $j=5$; the figure on the right shows $y({\cal C}, 6, \ell)$, which is always $\leq 0$ for every $\ell$.
\end{proof}

Since the input configuration is circular, and the algorithm runs in parallel in all blocks, we can rename the blocks. We will therefore assume in the sequel that  $y({\cal C},1)=0$. This means that starting in block $S_1$, {\em the cumulative surplus of blue agents is never positive.} This implies the following useful non-positive prefix and non-negative suffix property. 

\begin{lemma} [Non-Negative Suffix Property] \label{lemma:pref-suf} For any valid input configuration $\cal C$ and for every integer $j$, $1\leq j < k$ we have:
$$y({\cal C},1,j)\leq 0 \mbox{ and }y({\cal C},j+1,k-j)\geq 0$$
In other words, the total surplus of blue agents in the first $j$
blocks is at most $0$ while the total surplus of blue agents in the last $k-j$
blocks is at least $0$. 
\end{lemma}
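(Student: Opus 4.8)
The plan is to derive the Non-Negative Suffix Property directly from the assumption $y(\mathcal{C},1)=0$, which by the definition of $y(\mathcal{C},1)$ as the maximum of all cumulative surpluses $y(\mathcal{C},1,\ell)$ over $\ell \in \{1,\dots,k\}$ says precisely that every prefix sum $y(\mathcal{C},1,j)$ is $\le 0$. First I would record the first inequality: for every $j$ with $1 \le j < k$, $y(\mathcal{C},1,j) \le y(\mathcal{C},1) = 0$, which is immediate.

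For the suffix inequality, the key observation is that in any valid configuration the global surplus of blue agents is exactly zero, i.e. $y(\mathcal{C},1,k) = \sum_{i=1}^k y_i = \sum_{i=1}^k \bigl(n_b(S_i)-n_b(i)\bigr) = N_b - \sum_{i=1}^k n_b(i) = 0$ by validity of $\mathcal{C}$. Then I would split this total sum at block $j$: the surplus over the last $k-j$ blocks is $y(\mathcal{C},j+1,k-j) = y(\mathcal{C},1,k) - y(\mathcal{C},1,j) = 0 - y(\mathcal{C},1,j) \ge 0$, using the prefix inequality just established. This is essentially the same telescoping identity used in the proof of Lemma~\ref{lemma:excess}, just applied from the renamed starting block $S_1$.

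There is no real obstacle here — the statement is a short corollary of the renaming convention $y(\mathcal{C},1)=0$ (which is justified by Lemma~\ref{lemma:excess} and the circular symmetry of the algorithm) together with the validity identity $\sum_i y_i = 0$. The only point worth stating carefully is why the cumulative sum over any $\ell$ consecutive blocks can be written as a difference of two prefixes from $S_1$; this is just the additivity of the $y_i$ along the ring, and one should note it holds for the contiguous range $S_{j+1},\dots,S_k$ without wrap-around since $j < k$. I would present the two inequalities in a single short paragraph, citing Lemma~\ref{lemma:excess} for the existence of a zero-maximum block and the definition of validity for the total-surplus-zero fact.

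\begin{proof}
By the renaming convention we may assume $y(\mathcal{C},1)=0$. Since $y(\mathcal{C},1)$ is the maximum of the values $y(\mathcal{C},1,1),\dots,y(\mathcal{C},1,k)$, we immediately get $y(\mathcal{C},1,j)\le 0$ for every $j$ with $1\le j< k$, which is the first claimed inequality. For the second, observe that validity of $\mathcal{C}$ gives $y(\mathcal{C},1,k)=\sum_{i=1}^{k}\bigl(n_b(S_i)-n_b(i)\bigr)=N_b-\sum_{i=1}^{k}n_b(i)=0$. Because the blocks $S_{j+1},\dots,S_k$ are contiguous (as $j<k$), additivity of the surpluses yields $y(\mathcal{C},j+1,k-j)=y(\mathcal{C},1,k)-y(\mathcal{C},1,j)=-y(\mathcal{C},1,j)\ge 0$, as required.
\end{proof}
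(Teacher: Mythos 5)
Your proof is correct and follows essentially the same route as the paper: the prefix bound comes directly from $y(\mathcal{C},1)=0$ being the maximum of the prefix sums, and the suffix bound from splitting the total surplus (zero by validity) at block $j$. If anything, you are slightly more explicit than the paper about why the full sum $y(\mathcal{C},1,k)$ vanishes, which is a welcome clarification of the paper's somewhat compressed identity.
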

\begin{proof}
Since $y({\cal C},1)=0$, we have that $y({\cal C},1,j)\leq 0$. 
Since  $y({\cal C},1)=y({\cal C},1,j)+y({\cal C},j+1,k-j)= 0$, we get 
$y({\cal C},j+1,k-j)=- y({\cal C},1,j)\geq 0$. 
\end{proof}

In particular, notice that the non-negative suffix property implies that  $y_k=y({\cal C},k,1)\geq 0$.We now show that  every round of Algorithm  \ref{alg:main-algo-s} maintains the non-negative suffix property.  

\begin{lemma}\label{lemma:invariant}
Let  ${\cal C}=S_1S_2\ldots S_k$ be a configuration such that 
 for any $j$, $1 \leq j \leq k$, we have  $y({\cal C},1,j)\leq 0$ and  $y({\cal C},j+1,k-j)\geq 0$.
Executing one round of  Algorithm  \ref{alg:main-algo-s} 
gives a configuration  ${\cal C'}=S_1'S_2' \ldots S_k'$ such that  $y({\cal C'},1,j)\leq 0$ and  $y({\cal C'},j+1,k-j)\geq 0$ for any $j$, $1 \leq j \leq k$.
\end{lemma}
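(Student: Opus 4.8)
The plan is to isolate two elementary facts about how a single window affects the surpluses $y_i$, and then to do bookkeeping on prefix sums. Fix the round. For a window $[S_a | S_{a+1}]$ used in this round, Algorithm~\ref{alg:window-algo} moves $t$ red agents from $S_a$ to $S_{a+1}$ and $t$ blue agents from $S_{a+1}$ to $S_a$, where $t=0$ when $y_a\ge 0$ and $0\le t\le |y_a|$ when $y_a<0$. Hence: (i) the blue count of the window is unchanged, i.e. $y_a'+y_{a+1}'=y_a+y_{a+1}$; and (ii) for the \emph{left} block $y_a'=y_a$ when $y_a\ge 0$, while $y_a\le y_a'\le 0$ when $y_a<0$ --- in particular $y_a'\le 0$ whenever $y_a\le 0$, since $t\le|y_a|$. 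I would record these as a one-line observation up front.

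Because the rounds alternate, the window partition of the round is either $[S_1|S_2],[S_3|S_4],\ldots,[S_{k-1}|S_k]$ or $[S_2|S_3],[S_4|S_5],\ldots,[S_{k-2}|S_{k-1}],[S_k|S_1]$, and I would handle the two cases separately. In the first case, if $j$ is even then $S_1,\ldots,S_j$ is a union of complete windows, so fact (i) gives $y({\cal C}',1,j)=y({\cal C},1,j)\le 0$; if $j$ is odd then $S_1,\ldots,S_{j-1}$ is a union of complete windows and $S_j$ is the left block of the next one, so $y({\cal C}',1,j)=y({\cal C},1,j-1)+y_j'$, and a two-line split on the sign of $y_j$ (using fact (ii) together with $y({\cal C},1,j-1)\le 0$ and $y({\cal C},1,j)\le 0$) yields $y({\cal C}',1,j)\le 0$. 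Applying fact (i) to all windows gives $y({\cal C}',1,k)=y({\cal C},1,k)=0$, whence $y({\cal C}',j+1,k-j)=-y({\cal C}',1,j)\ge 0$, exactly as in Lemma~\ref{lemma:pref-suf}.

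In the second case the new ingredient is that the wrap-around window $[S_k|S_1]$ is inert: the hypothesis gives $y_k=y({\cal C},k,1)\ge 0$, so the $\mathtt{if}$ of Algorithm~\ref{alg:window-algo} never fires there and $S_1,S_k$ are untouched. Granting this, the same dissection works: for $j=1$ the prefix is $y_1$, unchanged; for odd $j\ge 3$ the block $S_1$ together with the complete windows $[S_2|S_3],\ldots,[S_{j-1}|S_j]$ covers $S_1,\ldots,S_j$, so the prefix is unchanged and $\le 0$; for even $j$ there is one trailing left block $S_j$ and I repeat the sign split of the first case. The suffix inequality again follows from $y({\cal C}',1,k)=0$.

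The prefix-sum accounting is routine; the step I would watch is the one place the hypothesis is genuinely used, namely that $y_k\ge 0$ forces the wrap-around window to do nothing. Without it, $y({\cal C}',1,1)=y_1'$ could become positive and the invariant would break --- this is precisely where the non-negative suffix property is fed back into its own proof.
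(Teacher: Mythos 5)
Your proof is correct, but it takes a different route from the paper's. The paper does not use the global structure of the round at all: it sequentializes the parallel exchanges, treating one window $[S_\ell|S_{\ell+1}]$ (with $1\le \ell<k$) at a time, and argues entirely in terms of suffixes --- the exchange leaves $y({\cal C},\ell,k-\ell+1)$ unchanged, and the key inequality $0\le y({\cal C},\ell,k-\ell+1)=y({\cal C},\ell+1,k-\ell)-z_\ell$ shows the $(\ell+1)$-st suffix is at least $z_\ell\ge t$, so removing $t$ blue agents from it keeps it non-negative; the prefix conditions then come for free from the total being zero. You instead exploit the parity structure of the round (for even $k$ the window partition is one of exactly two fixed partitions), decompose each prefix into complete windows plus at most one trailing left block, and use your two elementary facts about how a window changes its left block's surplus. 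Both arguments are sound; the trade-off is that the paper's window-by-window argument is agnostic to how the round pairs the blocks, which is exactly what lets the authors reuse the lemma unchanged in the odd-$k$ discussion of Section~\ref{odd-case}, whereas your case analysis is tied to $k$ even (fine in the context of this lemma, but it would need reworking for odd $k$, where the pairing leaves one block unpaired and cycles through $k$ different partitions). On the other hand, your explicit treatment of the wrap-around window $[S_k|S_1]$ --- observing that the hypothesis $y_k\ge 0$ makes it inert, and that this is where the invariant feeds back into its own preservation --- is a genuine improvement in completeness: the paper's proof quietly restricts to $\ell<k$ and only records the no-swap property of $[S_k|S_1]$ afterwards, as Corollary~\ref{cor:last-sl}.
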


\begin{proof}
For a block $S_\ell$, define $z_\ell = n_r(S_\ell) - n_r(\ell)$ as the surplus of red agents in $S_\ell$.
Let $\ell$ be an integer, $1\leq \ell<k$ such that Algorithm \ref{alg:window-algo} is applied to the pair $[S_\ell|S_{\ell+1}]$.  Assume for now that $\cal C'$ is the configuration obtained after only $[S_\ell|S_{\ell+1}]$ have made their exchanges and no other blocks.  We will show that the conditions of the Lemma hold for $\cal C'$.  This proves our statement, since we may repeat our analysis by applying the exchanges between pairs of blocks one at a time while maintaining the invariants of the Lemma.

Performing the exchanges between $S_\ell$ and $S_{\ell + 1}$ can affect 
only the suffixes $y({\cal C'},\ell,k-\ell+1)$ and $y({\cal C'},\ell+1,k-\ell)$; any  movement of agents that happens cannot affect a suffix $y({\cal C'}, j, k-j)$ with
$j < \ell$ or $j > \ell+1$. Clearly, if  $y(S_\ell) \geq 0$ or there are no blue agents in $S_{\ell+1}$ then there is no movement of agents between $S_\ell$ and $S_{\ell+1}$ and thus, obviously, 
  $y({\cal C'},\ell,k-\ell+1)=y({\cal C},\ell,k-\ell+1)\geq 0$, and  $y({\cal C'},1,\ell-1)=y({\cal C},1,\ell-1)\leq 0$.
  
Thus we only need to consider the case when   $S_\ell$ has $z_\ell= - y_l>0$ excess red agents and $S_{\ell+1}$ contains $n_b(S_{\ell+1})>0$  blue agents. 
Let $t=min\{z_\ell,n_b(S_{\ell+1}),n_b^*\}$. Then   $t$ blue agents move from $S_{\ell+1}$ to $S_\ell$ and 
  $t$ move  from $S_{\ell}$ to $S_{\ell+1}$. Clearly,  the total surplus of blue agents in the $\ell$-th suffix does not change, that is, $y({\cal C'},\ell,k-\ell+1)=y({\cal C},\ell,k-\ell+1)\geq 0$.
We only need to show that the movement of  $t$ blue agents to $S_\ell$ does not render the  $(\ell+1)$-th suffix negative. Observe that 
$$0\leq y({\cal C},\ell,k-\ell+1)=
y({\cal C},\ell+1,k-\ell) -z_\ell $$

In other words, since $S_\ell$ lacks $z_\ell$ blue agents and yet the $\ell$-th suffix is non-negative, it must be that the $(\ell+1)$-st suffix is at least $z_\ell$. Thus, $y({\cal C},\ell+1,k-\ell)\geq z_\ell \geq t$, and 
moving $t \leq z_\ell$ blue agents from $S_{\ell+1}$ to $S_\ell$ cannot make the $(\ell+1)$-st suffix negative. That is, 
 $$y({\cal C'},\ell+1,k-\ell)= y({\cal C},\ell+1,k-\ell)- t \geq 0$$
\end{proof}

Lemma \ref{lemma:invariant} implies that the property  $y_k\geq 0$  is maintained after every round. This means $S_k$ {\em never} has a deficit of blue agents, and therefore when it participates in the window $[S_k|S_1]$, there is no swap of agents. Therefore we obtain the following corollary:
\begin{corollary}
\label{cor:last-sl}

Algorithm \ref{alg:main-algo-s} never does any swap of agents between the blocks
 $S_k$ and $S_1$.
\end{corollary}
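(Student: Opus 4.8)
The idea is to isolate the single quantity that governs whether a swap can occur across the $S_k$--$S_1$ boundary, namely the blue surplus $y_k$ of the last block, and to show it never becomes negative. Recall that Algorithm~\ref{alg:window-algo}, applied to a window $[S_i\,|\,S_{i+1}]$, performs a swap only when $y_i<0$; when the window is $[S_k\,|\,S_1]$, the block playing the ``left'' role (the one indexed $i$ in Algorithm~\ref{alg:window-algo}) is $S_k$, since in the clockwise ordering $S_k$ immediately precedes $S_1$. Hence the triggering condition for that window is precisely $y_k<0$, and it suffices to prove that $y_k\ge 0$ holds at the start of every round, i.e.\ each time the windows are (re)formed and Algorithm~\ref{alg:window-algo} is applied.

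First I would record the base case. By the renaming convention adopted just before Lemma~\ref{lemma:pref-suf}, the initial configuration $\mathcal C$ satisfies $y(\mathcal C,1)=0$, so by Lemma~\ref{lemma:pref-suf} it has the non-negative suffix property; taking $j=k-1$ gives $y_k=y(\mathcal C,k,1)\ge 0$. For the inductive step I would invoke Lemma~\ref{lemma:invariant}: its hypothesis is exactly the non-positive prefix / non-negative suffix property of the current configuration, and its conclusion is the same property for the configuration obtained after one round. Thus, by induction on the number of rounds, the non-negative suffix property, and in particular $y_k\ge 0$, holds before and after every round of Algorithm~\ref{alg:main-algo-s}.

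Finally I would conclude: whenever Algorithm~\ref{alg:main-algo-s} forms the window $[S_k\,|\,S_1]$ (which happens in alternating rounds), its left block $S_k$ satisfies $y_k\ge 0$ at that moment, so the condition $y_i<0$ in Algorithm~\ref{alg:window-algo} fails and no agents are exchanged between $S_k$ and $S_1$. There is essentially no genuine obstacle here: all the substantive work is done in Lemma~\ref{lemma:invariant}, and the only points needing (trivial) care are that $S_k$ always occupies the left slot of the window it shares with $S_1$, and that the swap test in Algorithm~\ref{alg:window-algo} is on that left slot, so that $y_k\ge 0$ indeed blocks the swap.
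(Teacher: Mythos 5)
Your proof is correct and follows essentially the same route as the paper: the paper also deduces from Lemma~\ref{lemma:invariant} (together with the renaming convention ensuring $y({\cal C},1)=0$ initially and Lemma~\ref{lemma:pref-suf}) that $y_k\ge 0$ is maintained after every round, so $S_k$ never has a deficit of blue agents and hence no swap is triggered in the window $[S_k|S_1]$. Your write-up merely makes the induction over rounds and the ``left slot'' role of $S_k$ explicit, which the paper leaves implicit.
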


We now develop a potential function argument to show that Algorithm~\ref{alg:main-algo-s} eventually reaches the desired final configuration, after which no agents ever move.
Given a configuration
$ {\cal C}=S_1S_2\cdots S_k$, we define the {\em destination block} of the $i$-th blue agent (from position 1 in the ring) 
to be
$$dest(i) =  \argmin_{\ell} \sum_{i=1}^\ell n_i(\ell) \geq i$$
That is, the destination block of the  leftmost $n_b(1)$  blue agents in the input configuration is $S_1$, the next $n_b(2)$ blue agents in the input configuration is $S_2$, and so on. Next we define the 
{\em displacement} of the $i$-th blue agent (from position 1 in the ring) that is located in block $S_j$ to be the difference between the index of its present block and the index of its destination block:
$$displacement(i, j) = j - dest(i)$$
Given a configuration $ {\cal C}=S_1S_2\cdots S_k$, we define the {\em distance} of the configuration $d( {\cal C})$  to be the sum of displacements of all blue agents in the configuration. We first prove some properties of the distance function.

\begin{lemma}
\label{potential-properties}
Let ${\cal C}$ be the configuration after $i$ rounds in Algorithm~\ref{alg:main-algo-s}. Then:
\begin{enumerate}
\item $d({\cal C}) \geq 0$
\item $d({\cal C}) \leq d({\cal C'})$ where $\cal C'$ was the configuration before the $i$-th round.
\item If $d({\cal C}) = 0$, no agent ever moves again. 
\end{enumerate}

\end{lemma}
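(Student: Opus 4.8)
The plan is to derive all three parts from a single identity that rewrites the distance $d({\cal C})$ in terms of the cumulative surpluses $y({\cal C},1,j)$. First I would pin down that the labelling of blue agents is stable: by Lemma~\ref{lem:order} a round preserves the cyclic order of the blue agents, and by Corollary~\ref{cor:last-sl} no blue agent is ever swapped across the $S_k$--$S_1$ boundary, so, reading the ring from the start of $S_1$ to the end of $S_k$, the left-to-right order of the blue agents is preserved and ``the $i$-th blue agent'' denotes the same physical agent at every step; in particular $dest(i)$ is a fixed quantity. Let $B_j$ be the number of blue agents currently in $S_1 \cup \cdots \cup S_j$ and let $R_j = \sum_{m=1}^{j} n_b(m)$ be the required number, so that $B_j - R_j = y({\cal C},1,j)$; by the definition of $dest$, exactly $R_j$ of the blue agents have destination block among $S_1, \dots, S_j$. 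Using the elementary identity $b - a = \sum_{j=1}^{k-1} \bigl( \mathbf{1}[a \le j] - \mathbf{1}[b \le j] \bigr)$ valid for $a,b \in \{1, \dots, k\}$, writing $displacement(i,\cdot)$ as (index of the block of agent $i$) $-\, dest(i)$, summing over all blue agents and interchanging the order of summation gives
\[
 d({\cal C}) = \sum_{j=1}^{k-1} \bigl( R_j - B_j \bigr) = - \sum_{j=1}^{k-1} y({\cal C},1,j).
\]

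Part~1 is then immediate: $\cal C$ is reached after finitely many rounds, so by Lemma~\ref{lemma:invariant} applied inductively --- with base case the (renamed) input configuration, which satisfies $y({\cal C},1)=0$ and hence the hypotheses of Lemma~\ref{lemma:pref-suf} --- we have $y({\cal C},1,j) \le 0$ for every $j$, and the displayed identity yields $d({\cal C}) \ge 0$.

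For part~2 I would argue directly about a single round. The intra-block rearrangement (``blue before red'') changes no agent's block, so the only block-changing moves are the swaps of Algorithm~\ref{alg:window-algo}; by Corollary~\ref{cor:last-sl} every active window has the form $[S_\ell \,|\, S_{\ell+1}]$ with $1 \le \ell < k$, and in it exactly $t \ge 1$ blue agents move from $S_{\ell+1}$ to $S_\ell$, lowering the block index, hence the displacement, of each of them by exactly $1$, while the $t$ red agents that move do not contribute to $d$. Since the windows of a round partition the ring and act independently, $d({\cal C}) = d({\cal C}') - \sum t \le d({\cal C}')$, the sum ranging over the active windows of the round.

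For part~3, if $d({\cal C}) = 0$ the displayed identity forces $\sum_{j=1}^{k-1} y({\cal C},1,j) = 0$ with every summand $\le 0$ (part~1), so $y({\cal C},1,j) = 0$ for $1 \le j \le k-1$; combined with $y({\cal C},1,k) = 0$ (the total surplus of a valid configuration is $0$) this gives $y_j = 0$ for every block, i.e.\ every block already holds exactly the required number of blue, and hence of red, agents. Then no window is active in either pairing, so a round performs no move, $\cal C$ is a fixed point of the deterministic memoryless algorithm, and no agent ever moves again. The only delicate point is the bookkeeping behind the displayed identity --- in particular the stability of the agent labels and of $dest(i)$ across rounds, which is exactly what Lemma~\ref{lem:order} and Corollary~\ref{cor:last-sl} secure; everything else is a one-line computation for each part.
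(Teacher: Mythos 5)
Your proof is correct, and it reaches the three claims by a somewhat different route than the paper for parts~1 and~3. The paper argues per agent: from Lemma~\ref{lemma:invariant} it deduces that \emph{each individual} blue agent has non-negative displacement (a negative displacement would create a deficit in some suffix), gets monotonicity from the fact that blue agents only move left (Lemma~\ref{lem:order}) and never wrap past $S_1$ (Corollary~\ref{cor:last-sl}), and observes that $d({\cal C})=0$ forces every block to hold its required number of blue agents. You instead aggregate everything into the identity $d({\cal C}) = -\sum_{j=1}^{k-1} y({\cal C},1,j)$, from which part~1 follows by the inductively maintained non-positive-prefix property and part~3 by forcing every prefix sum, hence every $y_j$, to vanish; your part~2 is essentially the paper's argument, with the label-stability bookkeeping (fixed rank, hence fixed $dest(i)$, via Lemma~\ref{lem:order} and Corollary~\ref{cor:last-sl}) made explicit. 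The identity buys you something: it turns the paper's slightly terse ``otherwise there would be a deficit in some suffix'' into a one-line computation, and it shows as a by-product that $d$ drops by exactly the total number of blue agents swapped in a round; note also that the identity itself needs no label stability (it only counts agents per prefix), so that bookkeeping is really only needed for your part~2. Two harmless wording slips: not every window of a round performs a swap (so ``exactly $t\ge 1$ blue agents move'' should be restricted to windows where Algorithm~\ref{alg:window-algo} actually triggers), and $y({\cal C},1)=0$ is the hypothesis used to \emph{derive} the prefix/suffix property of Lemma~\ref{lemma:pref-suf}, which is then what Lemma~\ref{lemma:invariant} preserves; neither affects correctness.
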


\begin{proof}
It follows from Lemma~\ref{lemma:invariant} that the displacement of any blue agent is non-negative, as otherwise, there would be a deficit of blue agents in some suffix. Thus $d({\cal C})$ must always be non-negative. Second, since blue agents only move left in an order-preserving manner (Lemma~\ref{lem:order}), and from Corollary~\ref{cor:last-sl}, they never move left past  $S_1$ to $S_k$, the displacement of any blue agent cannot increase, and therefore $d({\cal C})$ cannot increase.  Finally, if $d({\cal C}) = 0$, then every block has its desired number of blue agents, and  no agent will ever move again.
\end{proof}

We now show that the distance function is guaranteed to decrease  after at most 2 rounds.



\begin{lemma} 
\label{lemma:displ}
Assume that we are given a configuration  ${\cal C} = S_1 S_2 \ldots, S_k$ and positive integers   $n_b(i)>0$ for $1\leq j
 \leq k$ such that $d({\cal C})>0$. After at most two rounds in Algorithm \ref{alg:main-algo-s} to ${\cal C}$ we obtain configuration ${\cal C}'$ such that 
$d({\cal C}')\leq d({\cal C})-1$. 
\end{lemma}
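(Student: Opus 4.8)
The plan is to show that during any two consecutive rounds, at least one blue agent strictly decreases its displacement and none increase it; combined with Lemma~\ref{potential-properties}(2) this gives $d({\cal C}')\le d({\cal C})-1$. Since $d({\cal C})>0$, there is some blue agent with positive displacement, hence some block $S_m$ whose current count of blue agents exceeds its target, i.e.\ $y_m>0$ (a surplus), while by the non-negative suffix property there must be a block with a deficit somewhere \emph{to the left} of the surplus — more precisely, walking leftward (in the circular order, but never crossing the boundary $S_k\,|\,S_1$ by Corollary~\ref{cor:last-sl}) from $S_m$ one must encounter, before reaching $S_1$, the first block $S_d$ with $y_d<0$; all blocks strictly between $S_d$ and $S_m$ have $y\ge 0$. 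The first thing I would do is make this ``there is a nearest deficit block to the left of some surplus'' statement precise using Lemma~\ref{lemma:invariant} / Lemma~\ref{lemma:pref-suf}, and observe that along the segment $S_d, S_{d+1},\dots,S_m$ the quantity $t=\min(n_b^*,|y_d|,n_b(S_{d+1}))$ computed by Algorithm~\ref{alg:window-algo} is strictly positive, because $|y_d|\ge 1$, $n_b^*\ge 1$, and $n_b(S_{d+1})\ge 1$ (the latter since $S_{d+1}$ has a non-negative surplus of blue agents and the target $n_b(d+1)$ is positive, so it holds at least one blue agent).

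Next I would track how a swap ``propagates left.'' In one round, the pairing partitions the ring into windows; consider the window containing the deficit block $S_d$. If $S_d$ is the left member of its window, it is paired with $S_{d+1}$, which — being on the non-negative side toward $S_m$ — contains a blue agent; a swap occurs, pulling $t\ge 1$ blue agents leftward into $S_d$, so the leftmost such blue agent $b$ has its block index decrease by one, hence $displacement(b)$ decreases by one: done in this round. If instead $S_d$ is the \emph{right} member of its window (paired with $S_{d-1}$ on its left), then no beneficial swap happens at $S_d$ in this round, but $S_d$ still has a deficit afterward (the invariant of Lemma~\ref{lemma:invariant} is preserved and the suffix computation is unchanged by a non-swap), and in the \emph{next} round the pairing shifts by one block, so $S_d$ becomes the left member of its window, paired with $S_{d+1}$; now the swap fires and some blue agent's displacement drops by one. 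In either case we get the required decrease within two rounds. Throughout, Lemma~\ref{potential-properties}(2) guarantees that no blue agent's displacement ever increases, so the net change to $d$ is at most $-1$.

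The main obstacle I anticipate is the bookkeeping around the case split on the window parity and making sure the ``$S_d$ still has a deficit going into the next round'' claim is airtight: I must verify that nothing happening in \emph{other} windows during the intervening round can eliminate $S_d$'s deficit or destroy the existence of the blue agent in $S_{d+1}$. This follows from the localized reasoning in the proof of Lemma~\ref{lemma:invariant} (an exchange in window $[S_\ell|S_{\ell+1}]$ only affects suffixes indexed by $\ell$ and $\ell+1$) plus the fact that $S_{d-1}$ cannot be sending a deficit's worth of agents into $S_d$ in a way that matters — actually, a swap in $[S_{d-1}|S_d]$ can only occur if $S_{d-1}$ has a deficit, and then blue agents move \emph{into} $S_{d-1}$ out of $S_d$, which would further \emph{increase} $S_d$'s deficit, not remove it; so $S_d$ retains $y_{S_d}<0$. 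I would state this carefully as the crux of the argument. A secondary, purely technical point is handling the circular renaming: since we have fixed $y({\cal C},1)=0$, the block $S_1$ acts as a ``wall'' and the leftward search for a deficit block from any surplus block terminates before wrapping around, which is exactly what Corollary~\ref{cor:last-sl} and Lemma~\ref{lemma:pref-suf} give us.
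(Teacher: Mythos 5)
Your proposal is correct and follows essentially the same route as the paper's proof: locate a window $[S_d|S_{d+1}]$ with a blue deficit on the left and at least one blue agent on the right (so $t\geq 1$), split on whether that window is active in the current or the following round, and combine the resulting displacement drop with the monotonicity of $d$ from Lemma~\ref{potential-properties}. Your explicit check that the deficit in $S_d$ and the blue agent in $S_{d+1}$ persist through the intervening round is a point the paper glosses over, but it is the same argument in substance.
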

\begin{proof}
Since  $d({\cal C})>0$, we know that ${\cal C}$ is not a valid final configuration, and Lemma \ref{lemma:pref-suf} and the assumption $n_b(i)>0$ for all $1\leq j \leq k$ implies that there exists a window  $[S_j | S_{j+1}]$ such that $n_b(S_j)<n_b(j)$ and $n_b(S_{j+1})>0$, i.e., 
there is a surplus  of red agents in $S_j$, and $S_{j+1}$ contains 
some blue agents.   
Then $t=\min\{|y_j|,n_b(S_{j+1}),n_b^*\}$ as defined in Algorithm~\ref{alg:window-algo} is positive. 
Now, if the window  $[S_j | S_{j+1}]$ is active in the next round in Algorithm \ref{alg:main-algo-s}, then $t$ blue agents  from $S_{j+1}$ are swapped 
with $t$  red agents from $S_j$. Clearly, the displacement of these  blue agents goes down by 1. Thus after this round, we obtain  
configuration ${\cal C'}$ with  $d({\cal C'})<d({\cal C})$. 

Suppose instead the window  $[S_j | S_{j+1}]$ is not active in the next round of Algorithm \ref{alg:main-algo-s}. Let ${\cal C}_1$ be the configuration obtained after the next round of Algorithm \ref{alg:main-algo-s}. By Lemma~\ref{potential-properties}, $d({\cal C}_1) \leq d({\cal C})$.  
In the following round of Algorithm \ref{alg:main-algo-s} the window  $[S_j | S_{j+1}]$ is  active and we transfer $t$ blue agents from $S_{j+1}$ to  $S_j$, and the displacement of these  $t$ agent goes down by 1. Thus after two rounds, we obtain the configuration ${\cal C}'$ with  $d({\cal C}')\leq d({\cal C}_1)-1 \leq d({\cal C})-1$. 
\end{proof}
Notice that the condition  $n_b(P_j)>0$ for $1\leq j \leq k$ is necessary.  For example, if there is a block $S_j$ with $n_b(P_j)=0$ and $y(S_j)=0$ then this block will never receive a blue agent, and this would prevent any flow of blue agents to the left of $S_j$, even when blue agents  are needed there, and any flow of red agents 
across $S_j$.   

\begin{theorem} 
\label{th:termination}

Given $n_b(j)>0$ and $n_r(j) \geq 0$ for all all $1 \leq j \leq k$, where $k$ is even, and a valid input configuration, Algorithm~\ref{alg:main-algo-s} reaches a configuration where $n_b(S_j) = n_b(j)$ and $n_r(S_j) = n_r(j)$ in every block $S_j$. Furthermore, this takes $O(nk)$ steps, and after this configuration is reached, no agent ever moves again.

\end{theorem}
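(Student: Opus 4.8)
The plan is to combine the three properties of the distance function established in Lemma~\ref{potential-properties} with the decrease guarantee of Lemma~\ref{lemma:displ}. First I would argue correctness: by Lemma~\ref{potential-properties}(1) the distance $d(\cal C)$ is always a non-negative integer, and by Lemma~\ref{lemma:displ}, as long as $d(\cal C)>0$ it strictly decreases by at least one every two rounds. Since the initial distance $d(\cal C_0)$ is finite, after at most $2\,d(\cal C_0)$ rounds we must reach a configuration with $d(\cal C)=0$. By the definition of $d$ and of $dest(\cdot)$, $d(\cal C)=0$ means every blue agent sits in its destination block, so block $S_j$ contains exactly $n_b(j)$ blue agents; since the configuration is valid ($n_b(\cal C)=\sum_j n_b(j)$ forces $n_r(\cal C)=\sum_j n_r(j)$) and the block length is fixed at $p$, the number of red agents in $S_j$ is automatically $n_r(j)$. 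Finally Lemma~\ref{potential-properties}(3) tells us no agent ever moves once $d(\cal C)=0$, which is exactly the termination condition in our model.

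For the running time I would bound $d(\cal C_0)$. Each of the $N_b\le n$ blue agents has a displacement of at most $k-1$ in absolute value (and by Lemma~\ref{potential-properties}(1) it is in fact between $0$ and $k-1$), so $d(\cal C_0)\le N_b(k-1) = O(nk)$. Hence the algorithm terminates within $2\,d(\cal C_0) = O(nk)$ rounds, and since each round is a single time step (or a constant number of them), the total time is $O(nk)$ steps. The sharper bound $N_b/n_b^* + k + 4$ claimed in Theorem~\ref{thm:time-bound} will require a more careful, separate analysis; here I only need the crude $O(nk)$ estimate, which falls out immediately.

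The one subtlety to make sure of is that the hypothesis of Lemma~\ref{lemma:displ} genuinely applies at every step: it needs $n_b(j)>0$ for all $j$, which is part of the theorem's hypothesis, and it needs the non-positive-prefix / non-negative-suffix invariant, which holds at the start after the WLOG renaming ($y(\cal C,1)=0$, Lemma~\ref{lemma:pref-suf}) and is preserved by every round (Lemma~\ref{lemma:invariant}). I would also note that Corollary~\ref{cor:last-sl} is implicitly used inside Lemma~\ref{potential-properties}(2) to guarantee displacements never increase by wrapping around. So the only mild obstacle is bookkeeping — confirming that all the invariants line up so that Lemma~\ref{lemma:displ} can be invoked repeatedly — rather than any new idea; the potential-function machinery does all the real work.

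\begin{proof}
By Lemma~\ref{potential-properties}, $d(\cal C)$ is a non-negative integer that never increases, and by Lemma~\ref{lemma:displ}, whenever $d(\cal C)>0$ it decreases by at least $1$ over any two consecutive rounds. Let $\cal C_0$ be the initial configuration; since every blue agent has displacement at most $k-1$ and there are $N_b \le n$ blue agents, $d(\cal C_0) \le N_b(k-1) = O(nk)$. Consequently, after at most $2\,d(\cal C_0) = O(nk)$ rounds the algorithm reaches a configuration $\cal C$ with $d(\cal C)=0$; as each round takes constant time, this is $O(nk)$ steps. When $d(\cal C)=0$, every blue agent lies in its destination block, so block $S_j$ contains exactly $n_b(j)$ blue agents; since the input is valid, $n_r(\cal C) = \sum_{j} n_r(j)$, and as each block has exactly $p$ agents it follows that $n_r(S_j) = n_r(j)$ for every $j$. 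Finally, Lemma~\ref{potential-properties}(3) guarantees that once $d(\cal C)=0$ no agent ever moves again, so the algorithm terminates in this configuration.
\end{proof}
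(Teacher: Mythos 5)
Your proof is correct and follows the same route as the paper: the paper's own proof simply notes that the initial distance is at most $nk$ and invokes Lemma~\ref{lemma:displ} and Lemma~\ref{potential-properties}, exactly the potential-function argument you spell out in more detail (including the clean observation that $d(\mathcal{C})=0$ forces the correct blue and hence red counts in every block).
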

\begin{proof}
Since the initial value of the distance function is at most $nk$, the result follows from Lemma~\ref{lemma:displ} and Lemma~\ref{potential-properties}.
\end{proof}

\subsection{Proof of termination time} \label{sec:analysis}
In this section, we show an upper bound on the number of iterations required for Algorithm~\ref{alg:main-algo-s} to terminate. 
Recall that $n_b^* = p - \max_{1 \leq j \leq k}  (n_r(j))$ and $N_b$ is the total number of blue agents in the input configuration ${\cal C}$.
Let $B$ be the set of blue agents, and  assume (for the purpose of the analysis) that they are numbered from left to right as $b_1, b_2, \ldots b_{N_b}$. Partition the members of $B$ into subsets $B_1, \ldots, B_{L}$ as follows.
Let $B_1 = \{b_1, \ldots, b_{n_b^*}\}$, $B_2  = \{b_{n_b^* + 1}, \ldots, b_{2n_b^*}\}$, and so on, so that 
$L = \lceil N_b/n_b^* \rceil$ and each subset has $n_b^*$ elements, except perhaps $B_L$. 
We say that an agent $b_i$ is \emph{$(t)$-cooperative} if, in every round $t' \geq t$, either 
(1) $b_i$ moves left; or (2) $b_i$ has reached its destination block as defined in the previous section.

The key idea  is that the first $n_b^*$ blue agents (i.e. $B_1$) start travelling every round at the latest by round $2$ until they reach their destination blocks. 
After this round, they ``free'' the path for the blue agents of $B_2$, which then start travelling every round from round $4$ until they reach their destination blocks. 
This unblocks $B_3$ from round $6$, and so on.  In the end, the blue agents  of $B_L$ can freely travel from round  $2L + 2$, 
and they only need to travel at most $k$ slices.  Note that agents in blocks $B_l$ may indeed move before round $2l + 2$ but they may be subsequently blocked; however from round $2l+2$ onward, they will continue to travel every round until they reach their destination block. These ideas are formalized in the next lemma.

\begin{lemma}\label{lem:that-technical-lemma}
For every $l \in [L]$, each blue agent in $B_l$ is $(2l + 2)$-cooperative. 
\end{lemma}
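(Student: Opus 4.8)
The plan is to prove the statement by induction on $l$. First I would handle the base case $l=1$: I want to show that every agent in $B_1=\{b_1,\dots,b_{n_b^*}\}$ is $(4)$-cooperative — actually $(2\cdot 1 + 2)=(4)$-cooperative, but the key work is to show $B_1$ agents travel every round from round $2$ onward until they reach their destination blocks. The reason $B_1$ is special is that these are the \emph{leftmost} $n_b^*$ blue agents, so by the order-preserving property (Lemma~\ref{lem:order}) they remain the leftmost blue agents forever, and in particular they are always within the first few blocks' worth of blue positions. The crucial observation is that an agent $b_i\in B_1$ that has not yet reached its destination block $S_{dest(i)}$ sits in some block $S_j$ with $j > dest(i)$; since $b_i$ is among the leftmost $n_b^*$ blue agents and $n_b^* \le n_b(m)$ for every block $m$, the blocks $S_1,\dots,S_{j}$ collectively cannot already contain their full quota of blue agents (they contain at most $i-1 < n_b^*$ blue agents to the left of $b_i$ plus those at $b_i$'s position or right within $S_j$), which forces $S_j$ (or some block between $dest(i)$ and $j$) to have a deficit, i.e. a red surplus, while the block to its left that $b_i$ needs to move into is reachable. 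Then I would argue that in the round where the window containing the relevant deficit-block is active, the quantity $t=\min\{|y_j|, n_b(S_{j+1}), n_b^*\}$ is positive and $b_i$ is among the $t$ leftmost blue agents that get swapped left — using that $b_i$ is a leftmost blue agent. Since windows alternate, within at most $2$ rounds the correct window is active, giving the ``every round at the latest by round $2$'' claim, and hence $(4)$-cooperativity.

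\textbf{Inductive step.} Assume every agent in $B_1\cup\dots\cup B_{l-1}$ is $(2l)$-cooperative; I want to conclude every agent in $B_l$ is $(2l+2)$-cooperative. The idea is that by round $2l$, all agents in $B_1,\dots,B_{l-1}$ have either reached their destination blocks or are moving left every round; in either case, from round $2l$ onward they will have vacated their ``slots'' — more precisely, once an agent of $B_{l-1}$ reaches its destination, it stays put and never blocks $B_l$; and while it is still moving left every round, it stays strictly ahead (to the left of, in block index) of the agents of $B_l$, so it does not obstruct them either. Thus from round $2l$ on, the agents of $B_l$ are ``the leftmost blue agents not yet settled,'' and I would rerun the base-case argument relative to them: an agent $b_i\in B_l$ in block $S_j$ with $j>dest(i)$ forces a red surplus in some block between $dest(i)+1$ and $j$ whose window, when active, swaps $b_i$ (it being among the leftmost unsettled blue agents) one block to the left. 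Accounting for the possible one-round wait for the right window parity, and a possible one-round lag for the $B_{l-1}$ agents to clear out, gives that $B_l$ agents move every round from round $2l+2$ onward until settled — i.e. $(2l+2)$-cooperativity. I would lean on Lemma~\ref{lemma:invariant} (non-negative suffix property) and Lemma~\ref{potential-properties} (displacements never go negative, never increase) throughout to guarantee that the deficits I need actually exist and that agents never get pushed backwards.

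\textbf{Main obstacle.} The hard part will be making precise the claim that when a block $S_j$ containing an unsettled agent $b_i\in B_l$ has its left-neighbor window active, the swap actually moves \emph{this particular} agent $b_i$ (not some other blue agent) left, and that such a deficit-block is reliably encountered within the parity window. I expect to need a careful bookkeeping argument: let $j^*$ be the \emph{leftmost} block index $> dest(i)$ that currently has a blue deficit (this exists because the prefix up to the block just left of $b_i$ cannot be blue-saturated, given $b_i$ and all of $B_{l+1},\dots$ lie at or to the right of block $j$, combined with $n_b(m)\ge n_b^*$ and the counting of how many blue agents precede $b_i$); then the agents that get swapped into $S_{j^*}$ from $S_{j^*+1}$ are its $t$ leftmost blue agents, and I must show $b_i$ is among them — which follows because every blue agent strictly left of $b_i$ is already settled in a block of index $\le dest(i) < j^*$ by the inductive hypothesis (they are in $B_1\cup\dots\cup B_l$ and those before $b_i$'s subset-position are cooperative and in fact settled by this round), so $b_i$ really is a leftmost blue agent in the relevant region. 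Getting the constants exactly right (why $2l+2$ and not $2l+1$ or $2l+3$) will require tracking both the window-parity delay and the one-round handoff from $B_{l-1}$ to $B_l$; I would isolate a sub-claim ``if $b_i\in B_l$ is unsettled at the start of round $t\ge 2l+2$, it moves left during round $t$'' and prove that directly, from which $(2l+2)$-cooperativity is immediate.
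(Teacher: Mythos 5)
Your high-level plan --- induction on $l$, with earlier groups ``freeing the path'' for $B_l$ --- is the same as the paper's, but two steps in your sketch do not hold as stated, and they are precisely where the paper has to work hardest. The central gap is in your inductive step: you rely on the claim that every blue agent strictly to the left of an unsettled $b_i \in B_l$ is \emph{already settled} in a block of index at most $dest(i)$ by round $2l+2$. The inductive hypothesis only gives cooperativity for $B_1,\ldots,B_{l-1}$ (each such agent moves left every round \emph{or} is at its destination); those agents may still be in transit for up to $k$ further rounds after round $2l$, and agents of $B_l$ itself that precede $b_i$ are not covered by the induction at all. Your fallback --- that a still-moving $B_{l-1}$ agent stays \emph{strictly} ahead of $B_l$ in block index --- is not implied by Lemma~\ref{lem:order} either: such an agent can sit in the same block as $b_i$, to its left, and take priority in the swap, which is exactly the obstruction the paper must deal with. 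The paper avoids any settledness claim by counting instead: if the block $S_{a-1}$ to the left of $b_i$'s block contains $z$ blue agents, they must all be from $B_l$ (earlier groups cannot be there without violating their cooperativity, later groups cannot by order preservation), so $S_{a-1}$ has at least $n_b^*-z$ surplus red agents while $S_a$ holds at most $n_b^*-z$ agents of $B_l$ as its leftmost blue agents, so the swap size $t$ reaches $b_i$; and in the delicate sub-case where $B_{l-1}$ agents share $S_a$ with $b_i$, it looks back one round to show $S_a$ had no blue agents at the start of the previous round, hence at most $n_b^*$ now, all of which fit into the deficit of $S_{a-1}$. Your sketch supplies neither this counting nor anything equivalent, so the claim that $b_i$ is among the $t$ leftmost blue agents swapped left (and even that the deficit sits in the block \emph{immediately} to its left, rather than somewhere between $dest(i)$ and $b_i$'s block) is unsupported outside the base case.

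The second gap concerns the definition of cooperativity: it requires $b_i$ to move left in \emph{every} round from $2l+2$ on (unless at its destination), whereas your parity argument (``within at most $2$ rounds the correct window is active'') only yields movement every other round; that neither matches the definition nor yields the constant $2l+2$ and the ensuing $2N_b/n_b^*+k+4$ bound. The paper closes this with its case (**): if in the round under consideration $b_i$'s block is paired to the right, then $b_i$ could not have entered that block in the previous round, so it already failed to advance in the previous round, when its block \emph{was} paired to the left, and the main counting argument applied one round earlier (at time at least $2l+1$) gives the contradiction. You would need this reduction, or an equivalent every-round argument, for the statement as written to follow.
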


\begin{proof}
For convenience, define $B_0 = \emptyset$.  
By induction over $l \in \{0, 1, \ldots, L\}$ we show that each blue agent in each $B_l$ is $(2l + 2)$-cooperative. 
Using $l = 0$ as a base case, this statement is vacuously true: each blue agent in $B_0$ is $(0)$-cooperative.
Now suppose that the lemma holds for all $B_{l'}$ with $l' < l$.  
Assume that there is $b_i \in B_l$ such that in some round $w \geq 2l + 2$, the agent $b_i$ 
does not move left and it is not in its destination block. 
Let $S_a$ be the block that contains $b_i$ before (and after) the execution of round $w$.
It follows from the earlier discussions that $S_a \neq S_1$, as this would mean the destination block of $b_i$ requires it to move left from $S_1$ to $S_{k}$.

There are 2 cases: either, in round $w$, 

\vspace*{0.1in}

(*) $S_a$ is in the active window $[S_{a-1} | S_a]$;

\vspace*{0.1in}
(**) $S_a$ is in the active window $[S_a | S_{a+1}]$.

\vspace*{0.1in}

Let us first suppose that (*) holds. Note that since $w \geq 2l + 2$, we may assume by induction that all the blue agents  in  $B_1 \cup \ldots B_{l-1}$ are $(w- 1)$-cooperative\footnote{The reader may bear in mind that the proof works in the particular case $l = 1$ and $w= 1$}.
This means that at the start of round $w$,  the block $S_{a-1}$ cannot contain any blue agent in $B_1 \cup \ldots \cup B_{l-1}$, as such an agent could not advance in round $w$, contradicting its $(w-1)$-cooperativeness. 
Moreover, by Lemma~\ref{lem:order}, $S_{a-1}$ also cannot contain any blue agent in $B_{l+1}, \ldots, B_L$, as this would break the ordering of the blue agents since $b_i \in B_l$ is in $S_a$.
It follows that $S_{a-1}$ can only have  blue agents from $B_l$ at the start of round $w$.

Suppose $S_{a-1}$ has  $z>0$ blue agents from $B_l$ at the beginning of round $w$. Then it has $p-z$ red agents. Since  the maximum number of red agents required in a block is  
$p-n_b^*$, there are at least $p-z-(p-n_b^*)=n_b^*-z$ agents that  are surplus red agents in $S_{a-1}$. Meanwhile by the order preserving property of Lemma~\ref{lem:order}, $S_a$ contains no agents from $B_j$ for any $j < l$, since $S_{a-1}$ has an agent from $B_l$. Furthermore, $S_a$ contains at most $n_b^*-z$ blue agents from $R_l$, and any agents that it contains from $B_j$ with $j>l$ would occur to the right of any blue agents from $R_{l}$. Hence agent $b_i$ would move to $S_{a-1}$ in round $w$, a contradiction. 

If instead $S_{a-1}$ has no blue agents at the beginning of round $w$, it must have only red agents, that is, $p$ of them.  Then  there must be at least $n_b^*$ excess red agents in $S_{a-1}$, and at most $n_b^*$ agents from $B_l$ in $S_a$. Thus the only reason that $r_i$  would not move to $S_{a-1}$  in round $w$ is that there are agents from $B_{l-1}$ in $S_{a}$ which have priority moving left over the agent $b_i$. Let $b_h$ be such an agent.  Consider the round $w- 1$.
At the start of that round, $S_a$ was in the window $[S_{a} | S_{a+1}]$ and so $S_a$ could not have contained a blue agent from $B_1 \cup \ldots B_{l-1}$, as these would not be able to move left in round $w-1$, contradicting our inductive hypothesis.  Consequently, $b_h$ was in $S_{a+1}$ and by the order preserving property of Lemma~\ref{lem:order}, $S_a$ could not contain a blue agent  from $B_l \cup \ldots \cup B_L$ either.  In other words $S_a$ had no blue agents at the start of round $w - 1$.  
 At most $n_b^*$ blue agents could have moved left to $S_a$ in round $w- 1$, and it follows that at the start of round $w$, $S_a$ has at most $n_b^*$ blue agents.
But as we argued, $S_{a-1}$ has at least $n_b^*$ excess red agents at the start of round $w$, which means that 
all the blue agents in $S_a$ are able to move left to $S_{a-1}$, including $b_i$, a contradiction.

Next let us suppose that (**) holds. At the start of round $w - 1$, the agent $b_i$ was either in $S_{a}$ or in $S_{a+1}$. 
But if $b_i$ was in $S_{a+1}$ at the start of round $w-1$,   it could not move from $S_{a+1}$ to $S_a$, as  $S_{a+1}$ was in the window $[S_{a+1} | S_{a+2}]$ in that round.  
Thus $b_i$ was in $S_a$ in round $w - 1$ and it did not advance, noting that $S_a$ belonged to the window $[S_{a-1}|S_{a}]$.  
We now use an identical argument for time $w-1\geq 2l+1$, as we did above for the case when (*) holds, to obtain a contradiction. We  conclude that every $b_i \in B_l$ is $(2l + 2)$-cooperative.\end{proof}

\begin{theorem}
\label{thm:time-bound}
Given even $k$ and positive integers $n_r(j)$ and $n_b(j)$ for all $1\leq j \leq k$ and a valid input configuration ${\cal C}$, Algorithm~\ref{alg:main-algo-s} terminates in at most $\frac{2 N_b}{n_b^*} + k + 4$ time steps, where $n_b^* = \min_{1 \leq j \leq k}  (n_b(j))$ and $N_b$ is the total number of blue  agents in ${\cal C}$.
\end{theorem}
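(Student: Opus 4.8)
The plan is to combine Lemma~\ref{lem:that-technical-lemma} with the bounds from Lemma~\ref{potential-properties} and Lemma~\ref{lemma:displ}, but the cleanest route is to argue directly about when the last group $B_L$ finishes travelling. First I would recall that $L = \lceil N_b/n_b^* \rceil \leq N_b/n_b^* + 1$, and that by Lemma~\ref{lem:that-technical-lemma} every blue agent in $B_L$ is $(2L+2)$-cooperative; the same lemma (for each $l$) shows every agent is $(2l+2)$-cooperative and hence, a fortiori, every agent is $(2L+2)$-cooperative. So from round $2L+2$ onward, every blue agent that has not yet reached its destination block moves left in every round.

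Next I would bound how far any agent still needs to travel at round $2L+2$. The key geometric fact is that the destination blocks $dest(i)$ are assigned consecutively from $S_1$ by the order-preserving property (Lemma~\ref{lem:order}) together with Corollary~\ref{cor:last-sl}: no blue agent ever moves left past $S_1$ into $S_k$, so displacements are always non-negative (Lemma~\ref{potential-properties}, item 1), and an agent in block $S_j$ with destination $S_{dest(i)} \leq j$ needs at most $j - dest(i) \leq k-1$ left-moves. Since each agent that is not yet home advances one block to the left in each round from round $2L+2$ on, every agent reaches its destination block by round $(2L+2) + (k-1) = 2L + k + 1$. Using $L \leq N_b/n_b^* + 1$ this is at most $\frac{2N_b}{n_b^*} + k + 3$ rounds after which every block already has its required number of blue agents, hence its required number of red agents as well (since block lengths are fixed at $p$). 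By Lemma~\ref{potential-properties}, item 3, once $d(\mathcal C) = 0$ no agent ever moves again; so the algorithm has terminated, and allowing one extra round for the configuration to be recognized as stable gives the claimed bound $\frac{2N_b}{n_b^*} + k + 4$.

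The main obstacle I anticipate is the precise bookkeeping of the additive constant: one must be careful that "reaching the destination block by round $r$" really means the desired final configuration holds after round $r$ (and thus, by item 3 of Lemma~\ref{potential-properties}, at every later round), and that the off-by-one coming from $\lceil \cdot \rceil$, from the base case $l=0$ of Lemma~\ref{lem:that-technical-lemma}, and from the termination convention (``no agent moves after this time step'') all line up to give exactly $+4$ rather than $+3$ or $+5$. A secondary point worth checking is that an agent in $B_L$ may still be blocked for rounds $< 2L+2$ but never gets pushed right or past $S_1$, so the bound on remaining displacement at round $2L+2$ is still $k-1$; this follows from Lemma~\ref{lem:order} and Corollary~\ref{cor:last-sl} but should be stated explicitly. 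Everything else is immediate from the lemmas already proved.
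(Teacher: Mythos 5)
Your proof follows the same route as the paper: partition the blue agents into groups $B_1,\dots,B_L$, invoke Lemma~\ref{lem:that-technical-lemma} to conclude every blue agent is $(2L+2)$-cooperative, and then add at most $k$ further rounds of uninterrupted left movement, using $L \leq N_b/n_b^* + 1$ to get the stated bound. Your bookkeeping of the additive constant is if anything slightly more careful than the paper's, so the argument is correct and essentially identical in approach.
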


\begin{proof}
Consider the partition $B_1, \ldots, B_L$ of the blue agents as defined above.
We have $L = \lceil N_b /n_b^*\rceil \leq N_b / n_b^* + 1$.
By Lemma~\ref{lem:that-technical-lemma}, each $b_i$ is $(2L + 2)$-cooperative, implying that each $b_i$ is also $(2 N_b / n_b^* + 4)$-cooperative.
This means that after at most $2 N_b / n_b^* + 4$ rounds, every blue agent that is not yet at its destination block advances a block in every round. 
Since there are $k$ blocks, such a $b_i$ will reach its destination after at most $k$ additional rounds.
Since this holds for any $b_i$, and recalling that every round can be executed in one step, this proves the statement.  
\end{proof}

\begin{corollary}
\label{cor:same}
If the diversity requirements are homogeneous, that is $n_b(i) = m$ for all $1 \leq j \leq k$, then Algorithm~\ref{alg:main-algo-s} terminates in at most $3k  + 4$ steps, and this is  optimal. \end{corollary}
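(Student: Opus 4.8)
The plan splits into two parts: the upper bound $3k+4$ is an immediate specialization of Theorem~\ref{thm:time-bound}, and the matching $\Omega(k)$ lower bound follows from a simple ``spreading'' argument that applies to \emph{any} algorithm in our model.

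For the upper bound, I would note that homogeneity means $n_b(j) = m$ for every $j$, so $n_b^* = \min_{1\le j\le k} n_b(j) = m$; moreover, validity of the initial configuration forces $N_b = \sum_{j=1}^k n_b(j) = km$. Substituting into the bound of Theorem~\ref{thm:time-bound} gives a termination time of at most $\frac{2N_b}{n_b^*} + k + 4 = \frac{2km}{m} + k + 4 = 3k+4$ steps, as required (taking $k$ even, as in Theorem~\ref{thm:time-bound}).

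For optimality I would exhibit a family of homogeneous instances forcing $\Omega(k)$ steps. Take $m = 1$ (one blue agent required per block) and $p = k$, and let the initial configuration place all $k$ blue agents in a single block, say $S_1$, with every other block filled entirely with red agents; this is valid since $N_b = k = \sum_j n_b(j)$ and the red counts also match. The block-adjacency structure is the cycle $C_k$, so there is a block $S_{j^*}$ at distance $\lfloor k/2 \rfloor$ from $S_1$ in $C_k$. In any correct final configuration $S_{j^*}$ must contain a blue agent, and in the initial configuration every blue agent lies in $S_1$. Since in one time step an agent can only move between adjacent blocks, at least $\lfloor k/2 \rfloor$ steps are needed before any agent occupying $S_1$ can reach $S_{j^*}$. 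Hence every correct distributed algorithm — in particular Algorithm~\ref{alg:main-algo-s} — requires $\Omega(k)$ steps on this family, so the $3k+4 = O(k)$ upper bound is asymptotically optimal.

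The only delicate point is phrasing the lower bound correctly: because agents are indistinguishable, one should not speak of a fixed agent's ``destination'', but instead argue purely at the level of blocks — some target block must receive a blue agent, every blue agent originates in $S_1$, and graph distance in the block cycle $C_k$ lower-bounds the number of moves needed for any agent to travel from $S_1$ to that block. I expect this (minor) point — making precise that ``all blue agents start in $S_1$'' together with ``target block far from $S_1$'' forces $\Omega(k)$ rounds regardless of which agent moves where — to be the main obstacle; the remainder is routine bookkeeping.
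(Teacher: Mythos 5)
Your proposal is correct. The upper bound is exactly the paper's argument: homogeneity gives $n_b^* = m$ and validity gives $N_b = km$, so Theorem~\ref{thm:time-bound} yields $\frac{2km}{m} + k + 4 = 3k+4$. For the lower bound you take a genuinely different route. The paper's instance has the first $k/2$ blocks all red and the last $k/2$ all blue (with $n_b(j) = p/2$), and the argument is a cut/bandwidth one: $pk/4$ blue agents must cross one of the two boundaries separating the red half from the blue half, at most $2p$ agents can cross per step, hence at least $k/8$ steps. Your instance instead concentrates all blue agents in a single block (forcing $p \ge k$, e.g.\ $p = k$ with $m = 1$) and uses a pure distance argument: some block lies at block-distance $\lfloor k/2 \rfloor$ from $S_1$ in the cycle of blocks, it must end up containing a blue agent, every blue agent starts in $S_1$, and agents advance at most one block per step, giving a $\lfloor k/2 \rfloor$ lower bound. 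Your argument is simpler and yields a better constant ($k/2$ versus $k/8$), and your worry about indistinguishability is a non-issue since you track whichever physical agent ends up in the far block; the paper's construction has the complementary advantage that it works for any fixed even block length $p$ with $k$ growing, whereas yours needs the block length to grow with $k$ so that one block can hold all $km$ blue agents. Either family suffices to show the $O(k)$ bound is asymptotically optimal for every algorithm in the model.
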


\begin{proof}
Suppose $n_b(j) = m$ for all $1 \leq j \leq k$. Then $n_b^*= m$ and $N_b = km$,
and  the bound given by Theorem~\ref{thm:time-bound} is $\frac{2 km}{m} + k + 4 = 3k + 4$.

We now show that there are inputs for which $\Omega(k)$ is a lower bound on the number of steps required for {\em any} algorithm in our agent model. 
 In particular consider an input with the size of the blocks $p$ even, and $n_b(i) = p/2$ for every block $i$, and the input configuration such that the first $k/2$ blocks have only red agents and the last $k/2$ blocks have only blue agents. Then $pk/4$ blue agents need to move to blocks $S_1$ to $S_{k/2}$. Since agents can only move to the next block in a single time step, all these agents would have to move from $S_{k/2+1}$ to $S_{k/2}$ or from $S_{k}$ to $S_{1}$ in some step. However, at most $p$ blue agents can move from $S_{k/2+1}$ to $S_k$ and similarly at most $p$ blue agents can move from $S_k$ to $S_1$ in one time step. It follows that the number of time steps required is at least $pk/(4(2p)) = k/8$.
\end{proof}

Notice that when the diversity  patterns are identical, given by  a
string $u$ of length $p$, the solution obtained by our algorithm is of type $ u^k$. 
The results in \cite{latin2018} imply that in this case, in the final ring network, in any neighbourhood of length $p$ starting at any position of the ring, the  number of blue and red agents is the same as in $u$.

\subsection{When the number of blocks $k$ is odd}

\label{odd-case}

Consider now the situation when the number of blocks $k$ is odd. 
Notice that in one round of Algorithm~\ref{alg:main-algo-s}, the surplus red agents from the  blocks $S_i,S_{i+2}, S_{i+4},\\ \ldots, S_{i-3}$ move to $S_{i+1},S_{i+3}, S_{i+5}, \ldots, S_{i-2}$,
respectively, and in the next round, surplus red agents from the blocks $S_{i+1},S_{i+3}, S_{i+5}, \ldots S_{i-2}$ move  to   $S_{i+2},S_{i+4},\\ S_{i+6},\ldots,S_{i-1} $ respectively. In this case surplus red agents from $S_{i-1}$  can move only in the subsequent third round, but this delay happens to  $S_{i-1}$ only once in $k$ rounds since the value of $i$ is incremented after every round.  
Then in the proof of Lemma \ref{lemma:displ} the window 
$[S_i,S_{i+1}]$ in which there is swap of agents between $S_i$ and $S_{i+1}$ is only in the third round of the algorithm. Thus,  the value of $d({\cal C})$ is only guaranteed to decrease after 3 rounds. However, due to the way the windows are shifted in the algorithm after each round, this can occur to $S_{i-1}$ only once in $k$ rounds. It is clear that Theorem \ref{th:termination} still holds.

\section{Distributed Algorithm for $q$ colours}

In this section we consider the pattern formation problem  when the given set of colours $\{c_1,c_2,\ldots,c_q\}$ is of size  $q\geq 3$. We show that Algorithm \ref{alg:main-algo-s} can be generalized to solve the pattern formation problem for any $q\geq 3$.

The main idea of the generalization is to iteratively "correct" the number of agents of colour $i$, starting with colour $c_1$. Assume that we are given an input instance ${\cal C}$ consisting of a set of colours $\{c_1,c_2,\ldots,c_q\}$, $q\geq 3$,  number $n_i(j)$ for each colour specifying the required number of agents of colour $c_i$ in block $S_j$, and a valid input configuration  ${\cal C}$ over alphabet $\{c_1,c_2,\ldots,c_q\}$.  

 Let $c_2'$ be a new colour not in $\{c_1,c_2,\ldots,c_q\}$.  Consider now the  input instance
${\cal C}_1$ of two colours $c_1$ and $c_2'$, obtained from ${\cal C}$  by treating all  colours
$\{c_2,\ldots,c_q\}$ as $c_2'$.  For every $1 \leq j \leq k$, the parameters $n_1(j)$ remain the same as in the original instance,  while  $n'_2(j) = \sum_{2 \leq i \leq q} n_i(j)$, that is, the required number of agents of colour $c_2'$ in block $S_j$ is the sum of requirements of colours $\{c_2,\ldots,c_q\}$ in block $S_j$ of the original problem. If we run Algorithm \ref{alg:main-algo-s} on the transformed input ${\cal C}'$ we get a final configuration ${\cal C}_1'$  in which the number of agents of colour $c_1$ in each block is correct. 

Next we  consider input instance
 ${\cal C}_2$ of two colours $c_2$ and $c_3'$  obtained from ${\cal C}_1'$  by treating all colours $\{c_3,\ldots,c_q\}$ as a new colour $c_3'$. The requirement $n_2(j)$ remains as in ${\cal C}$ while  $n_3'(j)$ is the sum of requirements of colours $\{c_3,\ldots,c_q\}$ in block $S_j$ of the original problem.
  If we run Algorithm \ref{alg:main-algo} on input instance ${\cal C}_2$ while the agents of colour  $c_1$ do not participate or move, we get a final configuration ${\cal C}_2'$  in which the number of agents of colours $c_1$ and $c_2$ in each block is correct. 
Clearly this can be repeated in $q-1$ phases to get a solution to the original instance ${\cal C}$; in phase $i$, we obtain the correct number of agents of colour $i$ in every block.  The above description assumes that the phases are run sequentially, and phase $i+1$ starts after phase $i$ is completed. 

In fact, since  in phase $i$, the agents of colours $\ell < i$ do not move, and as such their numbers do not change,  it is enough for agents to verify {\em locally} that the number of agents of colours $c_\ell$ with $\ell < i$ is correct before applying the algorithm for phase $i$. In other words, the phases can be interleaved. We describe how to modify the algorithms for a window $[S_j,S_{j+1}]$ as given in Algorithm \ref{alg:window-algo-q}. 


\begin{algorithm}
  \begin{algorithmic}
 \State $i \gets 1$
\While {$i < q \And (n_i(S_j)=n_i(j)\And n_i(S_{j+1})=n_i(j+1))$}   
\State $i \gets i+1$
 \EndWhile
 \If {$i < q$ }
\If{$n_i(S_j) <n_i(j)$, i.e., $S_j$ needs additional agents of colour $c_i$}
	\State $t \gets \min (n_i(j)-n_i(S_j), n_i(S_{j+1}))$
	\State  $t$   agents of colour $c_i$ in $S_{j+1}$ swap places with   $t$ agents of colour  greater than $i$ in $S_{j}$
        \EndIf
\Else {~} Rearrange $S_i$ into specific pattern $P_i$ if required in the problem specification. 
\EndIf
\end{algorithmic}
\caption{Algorithm for a Window $[S_j | S_{j+1}]$ for Agents of  $q$ colours} 
\label{alg:window-algo-q}
\end{algorithm}

\begin{algorithm}
\begin{algorithmic} 
\State $i \gets 1$
\Loop{}
    \State Apply in parallel Algorithm  \ref{alg:window-algo-q} to windows
             $[S_i|S_{i+1}],  [S_{i+2}|S_{i+3}] \ldots$
\State $i \gets 1+i\mod n$
\EndLoop
\end{algorithmic}
\caption{Pattern Formation Algorithm for Agents of $q$ colours } 
\label{alg:main-algo}
\end{algorithm}

Clearly, the termination of this  algorithm follows directly from the termination of
Algorithm \ref{alg:main-algo} as shown in Theorem \ref{th:termination}. Notice that $n_i(c_j)\geq 1$ is needed for every $1\leq i \leq q-1$ and $1 \leq j \leq k$ to guarantee termination.


The discussion above yields the following theorem:

\begin{theorem} 
\label{th:termination-q}
Algorithm ~\ref{alg:main-algo} solves  the  pattern formation problems P1 and P3 and terminates in time $O(nk)$ where $n$ is the number of agents and $k$ is the number of blocks.
%
%
%

\end{theorem}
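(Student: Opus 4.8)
The plan is to reduce the $q$-colour claim to the already-established two-colour results, Theorem~\ref{th:termination} and Theorem~\ref{thm:time-bound}, by analyzing the $q-1$ phases one at a time. First I would argue correctness. I claim the following invariant holds phase by phase: once phase $i$ has been entered in a window, the numbers of agents of each colour $c_\ell$ with $\ell<i$ in both blocks of that window already match their targets (this is exactly the guard of the \textbf{while} loop in Algorithm~\ref{alg:window-algo-q}), and from that point on those agents never move. Hence within phase $i$ the agents of colours $c_\ell$, $\ell<i$, are frozen, occupying a fixed number of slots in each block; the remaining slots hold agents that, from the phase-$i$ viewpoint, come in only two classes — colour $c_i$ and ``everything heavier,'' i.e.\ the merged colour $c_{i+1}'$. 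On this restricted set of positions, Algorithm~\ref{alg:window-algo-q} performs precisely the swap of Algorithm~\ref{alg:window-algo} for the two-colour instance with alphabet $\{c_i, c_{i+1}'\}$ and requirements $n_i(j)$ and $n_{i+1}'(j)=\sum_{\ell>i} n_\ell(j)$. The validity condition for this derived instance follows from the validity of ${\cal C}$ together with the fact that the frozen colours have been placed correctly, so $\sum_j n_i(j)$ equals the number of $c_i$-agents still present. Applying Theorem~\ref{th:termination} to this derived instance shows phase $i$ terminates with $n_i(S_j)=n_i(j)$ for every block, after which colour $c_i$ joins the frozen set. Note the hypothesis $n_i(j)\ge 1$ for $1\le i\le q-1$ (stated in the excerpt) is what makes each derived instance satisfy the ``positive requirement'' precondition of Theorem~\ref{th:termination}. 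After $q-1$ phases every colour $c_1,\dots,c_{q-1}$ is correct in every block, and since block sizes are fixed the count of the last colour $c_q$ is then automatically correct as well; one extra step in the \textbf{else} branch rearranges each block into its prescribed pattern $P_j$, which handles P3.

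For the time bound I would bound each phase separately and sum. Phase $i$, viewed as a two-colour run on $n$ positions with at most $n$ blocks, terminates in $O(nk)$ steps by Theorem~\ref{thm:time-bound} (or already by Theorem~\ref{th:termination}), independently of how many agents are frozen. There are $q-1\le n$ phases, so even run sequentially the total is $O(nk)\cdot(q-1)=O(n^2 k)$ — which is not quite the claimed $O(nk)$. To get $O(nk)$ one observes that the phases are interleaved rather than concatenated: a window enters phase $i$ as soon as its local colour-$(<i)$ counts are right, so phase $i$ effectively ``starts'' only a bounded number of rounds after phase $i-1$ has globally stabilized in the relevant region. The cleaner accounting is via a single global potential: let $\Phi$ be the sum over all colours $c_i$, $i\le q-1$, of the phase-$i$ distance function $d_i$ of the current configuration (displacement of colour-$c_i$ agents toward their destination blocks), each $d_i$ measured with the convention that it is computed only once the lighter colours are in place. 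One shows $\Phi$ is non-increasing every round and strictly decreases by at least $1$ every two rounds as long as it is positive, by the same window-activity argument as in Lemma~\ref{lemma:displ} applied to whichever colour is currently ``active'' in a window. Since $\Phi$ is initially at most $(q-1)\cdot nk = O(nk)$ — the total displacement summed over colours — the algorithm terminates in $O(nk)$ rounds, each executed in one step.

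The main obstacle is exactly this last point: making the interleaving rigorous so that the bound is $O(nk)$ and not $O(n^2k)$. The delicate part is that within a single window the ``active colour'' can differ from the active colour in a neighbouring window, so the clean two-colour flow picture of Section~\ref{sec:dist-algo} does not apply verbatim; one must check that the order-preserving property (Lemma~\ref{lem:order}) still holds for the agents of the currently-active colour in each window, and that a window whose lighter colours are \emph{not} yet settled simply idles for that colour without corrupting the counts established in earlier phases. Once one verifies that a frozen colour genuinely never moves (immediate from the \textbf{while} guard and the fact that swaps only ever involve the active colour and strictly heavier ones), the potential argument of the previous paragraph goes through and the theorem follows. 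I would present the correctness part in full and state the $O(nk)$ bound with the potential-function sketch, since the underlying decrease lemma is a routine adaptation of Lemma~\ref{lemma:displ}.
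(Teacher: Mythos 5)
Your correctness argument (phase-wise merging of the heavier colours into one, freezing the colours already placed via the guard of the while loop, invoking Theorem~\ref{th:termination} for each derived two-colour instance, the count of $c_q$ being forced automatically, and one extra rearrangement step for P3) is essentially the paper's own argument, which is given there as the discussion preceding the theorem. The gap is in your time analysis. First, your claim that running the phases sequentially only yields $O(n^2k)$ rests on charging $O(nk)$ to every phase; but in phase $i$ the ``blue'' agents are exactly the $N_i$ agents of colour $c_i$, so at the start of that phase the distance function is at most $N_i(k-1)$, and by Lemma~\ref{lemma:displ} phase $i$ finishes within $O(N_i k)$ rounds. Summing, $\sum_{i\le q-1} N_i \le n$ gives $O(nk)$ overall already for the sequential schedule (one can alternatively charge $2N_i/n_i^* + k + 4$ per phase via Theorem~\ref{thm:time-bound}). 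This per-phase accounting is the step your write-up is missing, and it is what actually supports the $O(nk)$ claim; the interleaving in Algorithm~\ref{alg:main-algo} is an optimization over the sequential schedule, not something the bound depends on.

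Second, the global potential $\Phi=\sum_i d_i$ you propose as a repair does not work as stated. Its advertised initial value ``$(q-1)\cdot nk = O(nk)$'' is not $O(nk)$ unless $q$ is a constant (the correct bound $\Phi\le nk$ must come from summing displacements over agents, each at most $k$). More seriously, $d_i$ is not a meaningful, monotone quantity before phase $i$ begins: while colour $c_i$ is still part of the merged ``red'' class, its agents are moved to the right and their relative order is not preserved (Lemma~\ref{lem:order} only protects the blue/active colour), so with respect to any destination assignment fixed in advance their displacements can increase. Your convention of ``counting $d_i$ only once the lighter colours are in place'' concedes exactly this, but then $\Phi$ is no longer a single quantity that is non-increasing round by round, and the claimed decrease ``by the same argument as Lemma~\ref{lemma:displ}'' has nothing to apply to in regions where lighter colours are unsettled. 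So the potential-function half of your proposal should be replaced by the simple per-phase bound $O(N_i k)$ summed over phases; if you additionally want to certify that the interleaved Algorithm~\ref{alg:main-algo} (rather than the sequential schedule) meets the same bound, that requires a separate argument, which the paper itself leaves implicit.
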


\subsection{When there is a lower bound on the number of agents} 
\label{sec:bounded}
In this section we address a restricted version of the second problem P2:  given 
$n_1(j) > 0$ and $n_i(j) = 0$ for all $2 \leq i \leq q$ and  $1 \leq j \leq k$, achieve a final configuration ${\cal C'}=S_1S_2 \ldots S_k$ with $n_i(S_j) \geq n_i(j)$. In other words, there is a positive lower bound required on the number of agents of colour $c_1$ in every block, but no lower bound on agents of any other colour. Observe  that in this case a valid initial configuration $w$ must satisfy the 
condition $n_1({\cal C}) \geq \sum_{i=1}^k n_1(S_i)$.
 
Let $d = n_1({\cal C}) -\sum_{i=1}^k n_1(S_i)$, that is, $d$ is the number of extra agents of colour $c_1$ present in the input configuration. 
Consider now applying Algorithm \ref{alg:main-algo-s} (for 2 colours) on the given input instance, treating agents of all colours except for colour $c_1$ as agents of colour $c_2'$, a new colour.   
Given a valid block configuration ${\cal C}$ we use the definitions of $y({\cal C},i,\ell)$ and   $y({\cal C},i,\ell)$ from
the preceding section. It is easy to see that  by replacing every $0$ by $d$ in the statements of Lemmas 
\ref{lemma:excess}, \ref{lemma:pref-suf}, \ref{lemma:invariant}, and Corollary \ref{cor:last-sl}, they 
remain valid for this problem.  
This implies that Lemma \ref{lemma:displ} and Theorem \ref{th:termination} remain valid exactly as stated in the previous section. Thus we get the following:

\begin{theorem} 
\label{th:termination-lower-bounded}
Algorithm ~\ref{alg:main-algo} solves the following restricted version of the pattern formation problem P2 in time $O(nk)$: Given a valid input configuration ${\cal C}$ and 
 $n_1(j) > 0$ and $n_i(j) = 0$ for all $2 \leq i \leq q$ and  $1 \leq j \leq k$, achieve a final configuration ${\cal C'}=S_1S_2 \ldots S_k$ with $n_i(S_j) \geq n_i(j)$.

\end{theorem}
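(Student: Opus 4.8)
The plan is to reduce Theorem~\ref{th:termination-lower-bounded} to the already-established machinery of Section~\ref{sec:dist-algo} by a simple relabelling of colours. First I would merge all colours $c_2, \ldots, c_q$ into a single new colour $c_2'$, so that the instance becomes a two-colour problem with colour $c_1$ (``blue'') and colour $c_2'$ (``red''). The requirement for the blue colour in block $S_j$ is $n_1(j)>0$, and the requirement for red is $n_2'(j) = 0$ for every $j$. With these parameters we run Algorithm~\ref{alg:main-algo-s} verbatim. The key observation is that the correctness analysis of Section~\ref{sec:dist-algo} never really used that the total blue surplus is exactly $0$; it only used that it is \emph{some fixed constant}. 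Concretely, in a valid input for P2 we have $d = n_1({\cal C}) - \sum_{i=1}^k n_1(S_i) \geq 0$ extra blue agents, so the ``total excess of red agents is $0$'' statement used in the proof of Lemma~\ref{lemma:excess} must be replaced by ``total surplus of blue agents is $d$'', i.e., ``total excess of red is $-d$''.

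Next I would go through Lemmas~\ref{lemma:excess}, \ref{lemma:pref-suf}, \ref{lemma:invariant} and Corollary~\ref{cor:last-sl}, replacing every occurrence of the constant $0$ that refers to the \emph{global} blue surplus by $d$. Thus Lemma~\ref{lemma:excess} becomes: there exists $i$ with $y({\cal C},i) = d$ (the maximum cumulative surplus over a full cycle equals the total surplus $d$). After renaming blocks so that this index is $1$, the prefix/suffix statement of Lemma~\ref{lemma:pref-suf} becomes $y({\cal C},1,j)\le d$ and $y({\cal C},j+1,k-j)\ge 0$ — equivalently every suffix has surplus at least $0$ and every prefix at most $d$, which is exactly what is needed: a block is allowed to have \emph{more} blue agents than its lower bound. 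The invariance argument of Lemma~\ref{lemma:invariant} carries through unchanged in form: the window exchange moves $t = \min\{z_\ell, n_b(S_{\ell+1}), n_b^*\}$ blue agents from $S_{\ell+1}$ to $S_\ell$ only when $S_\ell$ is genuinely deficient relative to its lower bound $n_1(\ell)$, and the non-negative-suffix bookkeeping is identical since it only tracks the $\ge 0$ side. Corollary~\ref{cor:last-sl} (no swap across $S_k\,|\,S_1$) follows because $y_k \ge 0$ is preserved.

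With those lemmas re-established, I would note that the destination-block / displacement / distance potential of Lemma~\ref{potential-properties} and Lemma~\ref{lemma:displ} can be set up the same way: $dest(i)$ is defined from the lower bounds $n_1(j)$ for the first $\sum_j n_1(j)$ blue agents, and the remaining $d$ ``surplus'' blue agents can simply be assigned destination block $S_1$ (or, more cleanly, one declares any agent already in a block meeting its lower bound to have displacement $0$). Since blue agents still only move left in an order-preserving way (Lemma~\ref{lem:order}) and never wrap past $S_1$ (Corollary~\ref{cor:last-sl}), the distance is non-negative, non-increasing, and equals $0$ exactly when every block meets its lower bound; Lemma~\ref{lemma:displ} then gives a decrease of at least $1$ every two rounds, since whenever some block is still below its lower bound the non-negative-suffix property forces the existence of an active deficient window $[S_j\,|\,S_{j+1}]$ with $n_b(S_{j+1})>0$. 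As the initial distance is $O(nk)$, termination in $O(nk)$ steps follows exactly as in Theorem~\ref{th:termination}. Finally, I would observe that merging colours $c_2,\dots,c_q$ loses no generality for the conclusion $n_i(S_j)\ge n_i(j)$, because $n_i(j)=0$ for all $i\ge 2$ is automatically satisfied.

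The main obstacle, and the part I would write out most carefully, is verifying that replacing $0$ by $d$ really does leave every inequality in Lemmas~\ref{lemma:excess}--\ref{lemma:invariant} and Corollary~\ref{cor:last-sl} valid — in particular pinning down which ``$0$''s are the global-surplus $0$ (these become $d$) versus the per-suffix ``$\ge 0$'' bounds (these stay $0$, since a block may legitimately exceed its lower bound). One must also check that the condition $n_1(j)>0$ for all $j$ is still exactly what is needed for Lemma~\ref{lemma:displ}: a block with a zero lower bound and no surplus could block the leftward flow, which is precisely the degenerate case ruled out by the hypothesis. Once that correspondence is nailed down, everything else is a direct appeal to the theorems already proved.
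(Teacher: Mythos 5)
Your proposal is correct and follows essentially the same route as the paper's own (very terse) proof: merge colours $c_2,\dots,c_q$ into one, run the two-colour algorithm, and re-validate Lemmas~\ref{lemma:excess}--\ref{lemma:invariant} and Corollary~\ref{cor:last-sl} with the global surplus $0$ replaced by $d$. In fact you are more careful than the paper in distinguishing which zeros become $d$ (prefix/global bounds) from those that stay $0$ (suffix bounds) and in noting that the destination/displacement potential needs adjusting for the $d$ extra agents, a point the paper silently glosses over.
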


\section{Discussion} \label{discussion}

Given $n$ agents of $q$ different colours situated on the nodes of a ring network that has been partitioned into $k$ blocks, we gave distributed algorithms for the agents to move to new locations where they satisfy specified patterns or diversity constraints in every block. Our analysis for the case of even $k$ and two colours is tight, but for the other problems, it would be interesting to obtain tight bounds. Our algorithms need a positive number of agents of at least the first $q-1$ colours to be required in the final configuration in every block; it is unclear to what extent this restriction is necessary for the existence of distributed algorithms for the problem.  

Although we only considered a ring network, our algorithms could be adapted to 
other neighbourhood networks provided that a ring-like flow among the neighbourhoods can be established.   
It would also be interesting to consider bounds on the movement capacity or bandwidth
and its impact on the number of steps needed.

\bibliographystyle{plain}
\bibliography{ref}

\end{document}